\newtheorem{theorem}{Theorem}[section]
\newtheorem{lemma}[theorem]{Lemma}
\newtheorem{proposition}[theorem]{Proposition}
\newcommand{\e}{\epsilon}
\newcommand{\z}{\zeta}
\newcommand{\C}{\mathbb{C}}
\newcommand{\R}{\mathbb{R}}
\newcommand{\V}{\mathcal{V}}
\newcommand{\abs}[1]{\lvert#1\rvert}
\newcommand{\bigabs}[1]{\big\lvert#1\big\rvert}
\newcommand{\Bigabs}[1]{\Big\lvert#1\Big\rvert}
\newcommand{\biggabs}[1]{\bigg\lvert#1\bigg\rvert}
\newcommand{\Biggabs}[1]{\Bigg\lvert#1\Bigg\rvert}
\newcommand{\leg}[2]{({#1}\!\mid\!{#2})}
\newcommand{\stack}[2]{\genfrac{}{}{0pt}{}{#1}{#2}}
\DeclareMathOperator{\GF}{GF}
\DeclareMathOperator{\Tr}{Tr}
\begin{document}

\title[The merit factor of binary arrays]
      {The merit factor of binary arrays\\derived from the quadratic character}

\author{Kai-Uwe Schmidt}

\date{30 June 2010 (revised 18 July 2011)}

\address{Department of Mathematics, 
Simon Fraser University, 8888 University Drive, Burnaby BC V5A 1S6, Canada.}

\email{kuschmidt@sfu.ca}

\thanks{The author is supported by Deutsche Forschungsgemeinschaft (German Research Foundation) under Research Fellowship SCHM 2609/1-1}

\subjclass[2010]{Primary: 94A55; Secondary: 68P30, 05B10}

\begin{abstract}
We calculate the asymptotic merit factor, under all cyclic rotations of rows and columns, of two families of binary two-dimensional arrays derived from the quadratic character. The arrays in these families have size $p\times q$, where $p$ and $q$ are not necessarily distinct odd primes, and can be considered as two-dimensional generalisations of a Legendre sequence. The asymptotic values of the merit factor of the two families are generally different, although the maximum asymptotic merit factor, taken over all cyclic rotations of rows and columns, equals $36/13$ for both families. These are the first non-trivial theoretical results for the asymptotic merit factor of families of truly two-dimensional binary arrays.
\end{abstract}

\maketitle

\section{Introduction}
\label{sec:intro}

We consider an \emph{array of size $n\times m$} to be an infinite matrix $A=(a_{ij})$ of real-valued elements satisfying
\[
a_{ij}=0\quad\mbox{unless $0\le i<n$ and $0\le j<m$}.
\]
The array is called \emph{binary} if $a_{ij}$ takes values only in $\{1,-1\}$ for all $i,j$ satisfying $0\le i<n$ and $0\le j<m$, and is called \emph{ternary} if $a_{ij}$ takes values only in $\{0,1,-1\}$. Given integers $u$ and $v$, the \emph{aperiodic autocorrelation} of $A=(a_{ij})$ at displacement $(u,v)$ is defined to be
\[
C_A(u,v):=\sum_{i,j}a_{ij}a_{i+u,j+v}.
\]
We refer to an array of size $n\times 1$ as a \emph{sequence of length $n$}, abbreviating the array $(a_{i0})$ to $(a_i)$ and the aperiodic autocorrelation $C_A(u,0)$ to $C_A(u)$.
\par
Binary arrays with small out-of-phase aperiodic autocorrelation have a wide range of applications in digital communications and storage systems, including radar~\cite{Alquaddoomi1989} and steganography~\cite{Schyndel1999}. Ideally, we would like to find a binary array $A$ of size $n\times m$ satisfying
\begin{equation}
\abs{C_A(u,v)}\le 1\quad\mbox{for all $(u,v)\ne(0,0)$},   \label{eqn:Barker}
\end{equation}
in which case, $A$ is called a \emph{Barker array}~\cite{Alquaddoomi1989}. However, it was recently shown by Davis, Jedwab, and Smith~\cite{Davis2007} that a (truly two-dimensional) Barker array must have size $2\times 2$. (Barker sequences of length~$n$, namely $n\times 1$ Barker arrays, are known for $n\in\{2,3,4,5,7,11,13\}$, and any other Barker sequence must have even length~\cite{Turyn1961} greater than $10^{29}$~\cite{Mossinghoff2009}.)
\par
Since the Barker array criterion~\eqref{eqn:Barker} is too restrictive for array dimensions exceeding $2\times 2$, it is natural to define a measure for the collective smallness of the aperiodic autocorrelation values of a binary array. One such measure is the \emph{merit factor}, which is defined for a binary array $A=(a_{ij})$ of size $n\times m$ with $nm>1$ to be
\[
F(A):=\frac{(nm)^2}{\sum_{(u,v)\ne (0,0)}[C_A(u,v)]^2}.
\]
Let $F_{n,m}$ denote the maximum value of $F(A)$ taken over all $2^{nm}$ binary arrays $A$ of size $n\times m$, and abbreviate $F_{n,1}$ to $F_n$. We note that the mean of $1/F(A)$, taken over all binary sequences $A$ of length $n$, equals $1-1/n$~\cite{Sarwate1984}. The argument of~\cite{Sarwate1984} easily generalises to two dimensions: the mean of $1/F(A)$, taken over all $2^{nm}$ binary arrays $A$ of size $n\times m$, equals $1-1/(nm)$. It follows that $F_{n,m}\ge nm/(nm-1)$, which asymptotically equals~$1$ and provides a first benchmark result.
\par
A number of theoretical and computational results on $F_n$ are known (see~\cite{Jedwab2005} for a survey). One line of research is to calculate $F_n$ for small values of~$n$. At present, $F_n$ has been calculated for all $n\le 60$ (see~\cite[Fig.~1]{Jedwab2005}, for example). The largest values of $F_n$ currently known are $F_{13}=\frac{169}{12}\simeq14.08$ and $F_{11}=\frac{121}{10}$, which are attained by Barker sequences. The computational analysis of $F_n$ quickly becomes infeasible as $n$ grows. Another line of research is therefore to construct particular infinite families of binary sequences of increasing length and to calculate their asymptotic merit factor.
\par
The only non-trivial infinite families of binary sequences for which the asymptotic value of the merit factor is known are Rudin-Shapiro sequences \cite{Littlewood1968}, Legendre sequences \cite{Hoholdt1988}, and $m$-sequences \cite{Jensen1989}, together with some generalisations of these families \cite{Hoholdt1985}, \cite{Jensen1991}, \cite{Schmidt2009}, \cite{Jedwab2010}, \cite{Jedwab2010a}. The largest proven asymptotic merit factor of a binary sequence family is~$6$, which is attained by cyclically rotated Legendre sequences (see Theorem~\ref{thm:hj}). There is also considerable numerical evidence, though currently no proof, that an asymptotic merit factor greater than~$6.34$ can be achieved for a family of binary sequences related to Legendre sequences~\cite{Borwein2004}.
\par
Much less is known about the value of $F_{n,m}$ for $n,m>1$. Eggers~\cite[Tab.~4.2]{Eggers1986} computed $F_{n,m}$ for $nm\le 21$ and found lower bounds on $F_{n,m}$ for $nm\le 121$. Although the data supplied in~\cite{Eggers1986} are very limited, it is apparent that $F_{n,m}$ tends to be smaller than $F_{nm}$. The largest value of $F_{n,m}$ for $n,m>1$ reported in~\cite{Eggers1986} equals $F_{4,4}=\frac{16}{3}\simeq5.33$. However, an elementary construction technique gives binary arrays with larger merit factor. Given two sequences $A=(a_i)$ and $B=(b_j)$ of length $n$ and $m$, respectively, we follow~\cite{Calabro1968} in defining the \emph{product array} $A\times B:=(a_ib_j)$. A straightforward calculation shows that
\begin{equation}
C_{A\times B}(u,v)=C_A(u)C_B(v),   \label{eqn:C_product}
\end{equation}
from which we deduce
\begin{equation}
\frac{1}{F(A\times B)}=\Big(1+\frac{1}{F(A)}\Big)\Big(1+\frac{1}{F(B)}\Big)-1.   \label{eqn:F_prod}
\end{equation}
Let $A$ and $B$ be Barker sequences of length~$13$ and $11$, respectively. It follows from~\eqref{eqn:F_prod} that $F(A\times A)\simeq6.80$, $F(A\times B)\simeq6.27$, and $F(B\times B)\simeq5.81$. Another consequence of~\eqref{eqn:F_prod} is
\[
F_{n,m}\ge \frac{F_nF_m}{F_n+F_m+1}.
\]
\par
Currently, no theoretical results on the asymptotic merit factor of families of binary truly two-dimensional arrays are known. B\"omer and Antweiler~\cite{Boemer1993a} analysed the merit factor of several binary array families numerically. Among the investigated families, two types of array families related to the quadratic character appeared to have largest merit factor. The arrays in the first family were proposed by Calabro and Wolf~\cite{Calabro1968} and have size $p\times q$, and the arrays in the second family were proposed by B\"omer, Antweiler, and Schotten~\cite{Boemer1993} and have size $p\times p$, where $p$ and $q$ are (not necessarily distinct) odd primes. Both families can be considered as two-dimensional generalisations of Legendre sequences. The authors of~\cite{Boemer1993a} successively applied three operations, namely rotations of rows and columns, stairlike rotations of rows and columns, and proper decimations, and computed the maximum value of the merit factor for arrays of small sizes taken from these families. They then remarked~\cite[p.~8]{Boemer1993a} that
\begin{quote}
``\dots, for large arrays, the ACF [aperiodic autocorrelation function] merit factors of both classes [the above mentioned array families of square size] appear to tend to~$3$.'',
\end{quote}
and asked for a theoretical explanation of this observation.
\par
In this paper we study the merit factor of two families of binary arrays. The arrays in the first family, called \emph{Legendre arrays}, have size $p\times q$, where $p$ and $q$ are (not necessarily distinct) odd primes, and contain as a special case the arrays proposed by Calabro and Wolf~\cite{Calabro1968}. The arrays in the second family, called \emph{quadratic-residue arrays}, have size $p\times p$, where $p$ is an odd prime, and contain as a special case the arrays proposed by B\"omer, Antweiler, and Schotten~\cite{Boemer1993}. We calculate, under certain conditions on the growth rate of $p$ relative to $q$, the asymptotic merit factor at all rotations of rows and columns for both array families. In particular, we show that for both families the asymptotic merit factor equals $\frac{36}{13}\simeq 2.77$ for an optimal rotation of rows and columns. Although we only maximise the merit factor with respect to the first operation considered in~\cite{Boemer1993a}, namely rotations of rows and columns, this result does not support the conclusion of~\cite{Boemer1993} quoted above. For all other (non-optimal) rotations of rows and columns, the asymptotic merit factor of quadratic-residue arrays is larger than that of square Legendre arrays.


\section{Two Families of Binary Arrays}

Given an odd prime $p$ and a positive integer $m$, let $\GF(p^m)$ be the finite field containing $p^m$ elements. Whenever convenient, we treat integers after reduction modulo $p$ as elements in $\GF(p)$. The \emph{quadratic character of $\GF(p^m)$} is the function $\chi:\GF(p^m)\to\R$ defined by
\[
\chi(a):=\begin{cases}
0 & \mbox{for $a=0$}\\
-1 & \mbox{for $a$ not a square in $\GF(p^m)$}\\
+1 & \mbox{otherwise}.
\end{cases}
\]
This function is multiplicative:
\begin{equation}
\label{eqn:multiplicativity}
\chi(a)\chi(b)=\chi(ab).
\end{equation}
If $m=1$, then $\leg{a}{p}:=\chi(a)$ is the \emph{Legendre symbol}. A \emph{Legendre sequence} $L=(\ell_i)$ of prime length $p>2$ is defined by
\[
\ell_i:=\begin{cases}
1          & \mbox{for $i=0$}\\
\leg{i}{p} & \mbox{for $1\le i<p$}.
\end{cases}
\]
If the initial element in a Legendre sequence is changed to zero, so that
\[
\ell_i=\leg{i}{p}\quad\mbox{for $0\le i<p$},
\]
then we call $L$ a \emph{ternary Legendre sequence}. 
\par
In what follows, we present two families of binary arrays, which can be considered as two-dimensional generalisations of Legendre sequences. Let $p$ and $q$ be two (not necessarily distinct) odd primes, and let $\V_{p,q}$ be the set of ternary arrays $V=(v_{ij})$ of size $p\times q$ satisfying
\[
\abs{v_{ij}}=\begin{cases}
1 & \quad\mbox{for ($i=0$ and $0\le j<q$) or ($j=0$ and $0\le i<p$)}\\
0 & \quad\mbox{otherwise}.
\end{cases}
\]
Then $\V_{p,q}$ contains $2^{p+q-1}$ arrays, each having $p+q-1$ nonzero elements. Given ternary Legendre sequences $L$ and $K$ of length $p$ and $q$, respectively, we define a~\emph{Legendre array} $X$ of size $p\times q$ to be a binary array of size $p\times q$ that can be written as
\[
X=L\times K+V\quad\mbox{for some $V\in\V_{p,q}$}.
\]
For example, the array $X=(x_{ij})$ of size $p\times q$, given by
\begin{equation*}
x_{ij}:=\begin{cases}
-1                   & \mbox{for $j=0$ and $0\le i<p$}\\
+1                   & \mbox{for $i=0$ and $1\le j<q$}\\
\leg{i}{p}\leg{j}{q} & \mbox{for $1\le i<p$ and $1\le j<q$},
\end{cases}
\end{equation*}
is a Legendre array of size $p\times q$. This particular array was originally defined by Calabro and Wolf~\cite{Calabro1968}, and its merit factor properties were investigated numerically in~\cite{Boemer1993a}. In the original paper~\cite{Calabro1968} such an array was called a ``quadratic-residue array''. We use the term \emph{Legendre array} to  distinguish it from our second family of binary arrays.
\par
Let $p$ be an odd prime, let $\chi$ be the quadratic character of $\GF(p^2)$, and let $\{\alpha,\alpha'\}$ be a basis for $\GF(p^2)$ over $\GF(p)$. Following B\"omer, Antweiler, and Schotten~\cite{Boemer1993}, we define a \emph{quadratic-residue array} $Y=(y_{ij})$ of size $p\times p$ to be a binary array of size $p\times p$ satisfying
\[
y_{ij}:=\begin{cases}
\mbox{$+1$ or $-1$}    & \mbox{for $i=j=0$}\\
\chi(i\alpha+j\alpha') & \mbox{for $0\le i,j<p,\,(i,j)\ne (0,0)$}.
\end{cases}
\]
The class of quadratic-residue arrays $(y_{ij})$ satisfying $y_{00}=+1$ was defined by B\"omer, Antweiler, and Schotten~\cite{Boemer1993}, and its merit factor properties were investigated numerically in~\cite{Boemer1993a}. In our analysis it will be convenient to change the leading element in a quadratic-residue array to zero. Accordingly, we define the \emph{ternary quadratic-residue array} of size $p\times p$ to be the ternary array $Z=(z_{ij})$ of size $p\times p$ given by
\begin{equation}
\label{eqn:QRarray2}
z_{ij}:=\chi(i\alpha+j\alpha')\quad\mbox{for $0\le i,j<p$}.
\end{equation}
\par
Next we define an operation acting on an array to produce a new array of the same size. Given an array $A=(a_{ij})$ of size $n\times m$ and real numbers $s$ and $t$, the \emph{rotation} $A_{s,t}$ is the array $B=(b_{ij})$ of size $n\times m$ given by
\begin{equation}
b_{ij}=a_{(i+\lfloor ns\rfloor)\bmod n,\,(j+\lfloor mt\rfloor)\bmod m}\quad\mbox{for $0\le i<n$ and $0\le j<m$}.   \label{eqn:rotation}
\end{equation}
If $A$ is a sequence of length~$n$, we abbreviate $A_{s,0}$ to $A_s$.
\par
The asymptotic merit factor of a Legendre sequence was calculated for all rotations by H{\o}holdt and Jensen~\cite{Hoholdt1988}.
\begin{theorem}[H{\o}holdt and Jensen~\cite{Hoholdt1988}]
\label{thm:hj}
Let $L$ be the Legendre sequence of prime length $p>2$, and let $s$ be a real number satisfying $-\tfrac{1}{2}<s\le\frac{1}{2}$. Then
\[
\frac{1}{\lim\limits_{p\to\infty}F(L_s)}=
\tfrac{1}{6}+8\left(\abs{s}-\tfrac{1}{4}\right)^2.
\]
\end{theorem}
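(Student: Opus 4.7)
The plan is to recast the merit factor via the generating polynomial of $L_s$. Writing $P(z) := \sum_{j=0}^{p-1}(L_s)_j\, z^j$, Parseval's identity gives
\[
\sum_{u \neq 0} C_{L_s}(u)^2 = \frac{1}{2\pi}\int_0^{2\pi} \bigabs{P(e^{i\theta})}^4\, d\theta - p^2,
\]
so the task reduces to showing that $\tfrac{1}{2\pi p^2}\int_0^{2\pi} \bigabs{P(e^{i\theta})}^4\, d\theta$ tends to $1+\tfrac{1}{6}+8(|s|-\tfrac{1}{4})^2$ as $p\to\infty$.

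Next I would unfold the rotation. Setting $r=\lfloor ps\rfloor$ and splitting the cyclically shifted polynomial at the wrap-around index gives
\[
P(z) = z^{-r}\bigl[P_L(z) + (z^p-1)\, Q_r(z)\bigr], \qquad Q_r(z) := \sum_{m=0}^{r-1}\ell_m\, z^m,
\]
where $P_L$ is the generating polynomial of the unrotated Legendre sequence. The key algebraic input is the Gauss-sum evaluation at $p$-th roots of unity: by multiplicativity of $\chi$ and a change of summation variable, $P_L(\omega^k) = \chi(k)\, g + 1$ for $k\neq 0$, where $\omega = e^{2\pi i/p}$ and $g := \sum_t \chi(t)\,\omega^t$ satisfies $\abs{g}^2 = p$. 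Hence $\abs{P_L(\omega^k)} \sim \sqrt{p}$ uniformly in $k \neq 0$, independently of $s$.

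For the $L^4$ integral I would parameterise $\theta = 2\pi(k+\tau)/p$ with $k \in \{0,\ldots,p-1\}$ and $\tau\in[0,1)$, so that $e^{ip\theta}-1 = e^{2\pi i\tau}-1$. On each sub-arc, $P_L(e^{i\theta})$ is interpolated from its Gauss-sum values at the surrounding roots of unity via a Dirichlet kernel, while $Q_r(e^{i\theta})$ is controlled through incomplete Gauss sums whose magnitude depends on the ratio $r/p \approx s$. Expanding $\bigabs{P_L + (e^{2\pi i\tau}-1) Q_r}^4$, integrating in $\tau$ and then summing in $k$, the problem becomes a Riemann sum that converges to an explicit double integral in the scaled variables $(\tau,k/p)$. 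Direct evaluation produces the decomposition $\tfrac{1}{6} + 8(|s|-\tfrac{1}{4})^2$: the first summand is the intrinsic fluctuation of $\abs{P_L}^2$ between grid points (present for any rotation), while the second is the boundary contribution induced by the rotation-driven phase mismatch.

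The main obstacle will be controlling the cross terms of the $L^4$ expansion and verifying that character-sum cancellations are sharp enough to suppress all lower-order contributions; this requires P\'olya--Vinogradov-type estimates for incomplete sums like $\sum_{m\le M}\chi(m)\,\omega^{mk}$, together with careful bookkeeping to isolate the two resonant pieces described above. Minor technicalities include replacing $r = \lfloor ps\rfloor$ by $ps$ in the limit, and the symmetry $s\mapsto -s$ (giving $|s|$ in the final formula) follows from the reversal invariance $C_{L_s}(-u) = C_{L_s}(u)$ combined with the near-palindromic structure of Legendre sequences.
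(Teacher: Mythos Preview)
The paper does not give its own proof of this theorem: it is quoted as the result of H{\o}holdt and Jensen~\cite{Hoholdt1988}, with only the remark (after Proposition~\ref{prop:FLegendre}) that it can be recovered from the ternary version via~\cite[Thm.~3]{Schmidt2009}. So there is no in-paper argument to match against directly; the natural comparison is with the H{\o}holdt--Jensen method, which the paper itself reproduces in Section~5 for the two-dimensional case.

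Your plan and that method diverge after the first step. H{\o}holdt--Jensen do not use the continuous $L^4$ integral or an arc decomposition: they use the exact discrete identity (Lemma~\ref{lem:CQ} here), split the $2p$-th roots into the $p$-th roots $\epsilon_k$ and their negatives $-\epsilon_k$, evaluate $L_s(\epsilon_k)$ directly via the Gauss sum together with $L_s(\epsilon_k)=\epsilon_k^{-\lfloor ps\rfloor}L(\epsilon_k)$, and handle $L_s(-\epsilon_k)$ by the Lagrange interpolation formula~\eqref{eqn:interpolation}. The rotation then enters only as the phase $\epsilon_k^{-\lfloor ps\rfloor}$, and the whole computation collapses to the closed-form identity~\eqref{eqn:exp_sum_identity}, which produces the term $8(|s|-\tfrac14)^2$ exactly. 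No Riemann sums, no incomplete character sums, and no P\'olya--Vinogradov are needed; the argument is purely algebraic with explicit error $O(p^{-1}(\log p)^2)$.

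Your route via $P(z)=z^{-r}[P_L(z)+(z^p-1)Q_r(z)]$ and an arc-by-arc analysis is a legitimate alternative starting point (and the decomposition is correct), but it trades the exact discrete identities for analytic estimates on $Q_r$ that you yourself flag as the ``main obstacle''. That obstacle is avoidable: if you restrict to the $2p$-th roots you see that $z^p-1$ vanishes at the $\epsilon_k$ and equals $-2$ at the $-\epsilon_k$, so $Q_r$ only enters at the negatives, where Lagrange interpolation handles everything without bounding incomplete sums. A small side remark: the appearance of $|s|$ does not come from $C_{L_s}(-u)=C_{L_s}(u)$ (that holds for every real sequence) but from the near-palindrome $\ell_{p-i}=\leg{-1}{p}\ell_i$, which makes $L_{-s}$ the reversal of $L_s$ up to a global sign and a single entry.
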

\par
The constraint $-\tfrac{1}{2}<s\le\tfrac{1}{2}$ in Theorem~\ref{thm:hj} is for notational convenience only since by definition $A_s$ is the same as $A_{s+1}$ for every sequence $A$ and all real $s$. The maximum asymptotic merit factor of a rotated Legendre sequence~$L_s$ is~$6$, which occurs for $s=\pm\tfrac{1}{4}$.


\section{Calculation of the Merit Factor of an Array}

Given a positive integer $n$, let
\[
\z_n:=e^{\sqrt{-1}\,\pi /n}
\]
be a primitive $(2n)$th root of unity. Let $A=(a_{ij})$ be an array of size $n\times m$. The \emph{generating function} of $A$ is defined to be the power series
\[
A(x,y):=\sum_{i,j}a_{ij}x^iy^j.
\]
If $A$ is a sequence of length $n$, we write $A(x)$ for $A(x,y)$.
\par
The next lemma shows how the merit factor of $A$ can be computed from the values $A(\z_n^i,\z_m^j)$. This approach generalises to two dimensions the method of H{\o}holdt and Jensen~\cite{Hoholdt1988} to compute the asymptotic merit factor of a sequence.
\begin{lemma}
\label{lem:CQ}
Let $A$ be an array of size $n\times m$. Then
\[
\sum_{u,v}[C_A(u,v)]^2=\frac{1}{4nm}\sum_{i=0}^{2n-1}\sum_{j=0}^{2m-1}\bigabs{A(\z_n^i,\z_m^j)}^4.
\]
\end{lemma}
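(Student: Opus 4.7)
The plan is to mimic the one-dimensional argument of H{\o}holdt and Jensen, expanding $\abs{A(x,y)}^4$ via $C_A$ and then exploiting the orthogonality of roots of unity, together with a support argument to collapse the resulting double sum to a diagonal.

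First I would observe that the numbers $\{\z_n^i:0\le i<2n\}$ are exactly the $(2n)$th roots of unity and likewise for $\{\z_m^j\}$, so the right-hand side is
\[
\frac{1}{4nm}\sum_{x^{2n}=1}\sum_{y^{2m}=1}\bigabs{A(x,y)}^4.
\]
For any such $(x,y)$ we have $\abs{x}=\abs{y}=1$, and a direct rearrangement of the generating function gives
\[
A(x,y)\,A(x^{-1},y^{-1})=\sum_{u,v}C_A(u,v)\,x^uy^v.
\]
Since the left side equals $\abs{A(x,y)}^2$ (because $A$ has real coefficients), squaring yields
\[
\bigabs{A(x,y)}^4=\sum_{u,v}\sum_{u',v'}C_A(u,v)\,C_A(u',v')\,x^{u-u'}y^{v-v'}.
\]

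Next, I would sum over all $(2n)$th roots of unity $x$ and $(2m)$th roots of unity $y$ and apply the orthogonality relations
\[
\sum_{x^{2n}=1}x^k=2n\,[2n\mid k],\qquad \sum_{y^{2m}=1}y^\ell=2m\,[2m\mid\ell].
\]
This reduces the quadruple sum to pairs $(u,v),(u',v')$ with $2n\mid u-u'$ and $2m\mid v-v'$.

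The crucial (and really the only non-routine) step is the support argument: since $a_{ij}=0$ outside $0\le i<n,\;0\le j<m$, the autocorrelation $C_A(u,v)$ vanishes unless $\abs{u}<n$ and $\abs{v}<m$. Consequently, if $u,u'$ both lie in the support range then $\abs{u-u'}<2n$, so $2n\mid u-u'$ forces $u=u'$; similarly $v=v'$. Therefore only the diagonal terms survive, giving
\[
\sum_{x^{2n}=1}\sum_{y^{2m}=1}\bigabs{A(x,y)}^4=4nm\sum_{u,v}[C_A(u,v)]^2,
\]
which rearranges to the claimed identity. I do not anticipate any serious obstacle; the support argument is the one place where one has to be careful, but it follows immediately from the definition of an array of size $n\times m$.
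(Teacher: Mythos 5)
Your proof is correct and follows essentially the same route as the paper: both establish the identity $\abs{A(x,y)}^2=\sum_{u,v}C_A(u,v)x^{-u}y^{-v}$ on the unit torus and then evaluate the fourth-power sum over $(2n)$th and $(2m)$th roots of unity, which the paper dismisses as ``an elementary calculation.'' You have merely spelled out that calculation in full --- the orthogonality relations and the observation that $C_A(u,v)$ is supported on $\abs{u}<n$, $\abs{v}<m$ so that only diagonal terms survive --- and all of those details check out.
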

\begin{proof}
Straightforward manipulation shows that
\begin{align}
A(x,y)A(x^{-1},y^{-1})&=\sum_{u,v}C_A(u,v)x^{-u}y^{-v}\quad\mbox{for $x\ne 0$ and $y\ne 0$},   \nonumber
\intertext{and therefore,}
\bigabs{A(x,y)}^2&=\sum_{u,v}C_A(u,v)x^{-u}y^{-v}\quad\mbox{for $\abs{x}=\abs{y}=1$}.   \label{eqn:abs_A_poly}
\end{align}
Using this identity, an elementary calculation gives
\begin{align*}
\frac{1}{4nm}\sum_{i=0}^{2n-1}\sum_{j=0}^{2m-1}\bigabs{A(\zeta_n^i,\zeta_m^j)}^4&=\frac{1}{4nm}\sum_{i=0}^{2n-1}\sum_{j=0}^{2m-1}\bigabs{A(\zeta_n^i,\zeta_m^j)}^2\overline{\bigabs{A(\zeta_n^i,\zeta_m^j)}^2}\\
&=\sum_{u,v}[C_A(u,v)]^2,
\end{align*}
as required.
\end{proof}


\section{The Merit Factor of Legendre Arrays}

In this section we compute the asymptotic merit factor of Legendre arrays for all rotations, subject to certain conditions on the growth rate of the dimensions. We first record a result on the aperiodic autocorrelation of rotated ternary Legendre sequences, which arises as an immediate corollary of~\cite[Thm.~3]{Schmidt2009}.
\begin{proposition}
\label{prop:FLegendre}
Let $L$ be the ternary Legendre sequence of prime length $p>2$, and let $s$ be a real number satisfying $-\tfrac{1}{2}<s\le\tfrac{1}{2}$. Then, as $p\to\infty$,
\[
\frac{1}{p^2}\sum_{u}\;[C_{L_s}(u)]^2=\tfrac{7}{6}+8\left(\abs{s}-\tfrac{1}{4}\right)^2+O(p^{-1}(\log p)^2).
\]
\end{proposition}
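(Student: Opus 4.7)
The plan is to use Lemma~\ref{lem:CQ} to convert $\sum_u [C_{L_s}(u)]^2$ into a Fourier-type sum $\frac{1}{4p}\sum_{j=0}^{2p-1} \bigabs{L_s(\z_p^j)}^4$, and then to analyse this sum via the method of H\o holdt and Jensen used to prove Theorem~\ref{thm:hj} for binary Legendre sequences. A ternary Legendre sequence differs from the corresponding binary one in a single entry (the leading $\ell_0$), and altering one entry of a $\pm 1$ sequence of length $p$ perturbs each aperiodic autocorrelation value by at most $2$, so
\[
\sum_u [C_{L_s^{\mathrm{ter}}}(u)]^2 = \sum_u [C_{L_s^{\mathrm{bin}}}(u)]^2 + O(p).
\]
After dividing by $p^2$, the ternary and binary versions agree up to an $O(p^{-1})$ correction, so Theorem~\ref{thm:hj} already delivers the leading term $\tfrac{7}{6} + 8(\abs{s}-\tfrac{1}{4})^2$; what remains is to upgrade this to a quantitative statement with the claimed error.

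To produce the explicit $O(p^{-1}(\log p)^2)$ error, I would revisit the H\o holdt--Jensen argument. Writing $k = \lfloor ps\rfloor$, the generating function of the rotated sequence decomposes according to the parity of $j$ in $\z_p^j$: when $j$ is even, the identity $\z_p^{jp}=1$ reduces $L_s(\z_p^j)$ to a unimodular multiple of the quadratic Gauss sum, which has modulus~$\sqrt{p}$; when $j$ is odd, $\z_p^{jp}=-1$ and $L_s(\z_p^j)$ becomes a signed partial character sum of the form
\[
\sum_{i=k}^{p-1}\leg{i}{p}\z_p^{ji}-\sum_{i=0}^{k-1}\leg{i}{p}\z_p^{ji}.
\]
Recognising $\frac{1}{4p^3}\sum_j \bigabs{L_s(\z_p^j)}^4$ as a Riemann sum and passing to an integral $\int_0^1 \abs{f_s(t)}^4\,dt$ for a suitable step-function-like $f_s$ built from Legendre symbols produces the closed-form main term.

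The main technical obstacle is quantifying the passage from Riemann sum to integral uniformly in~$s$. This rests on a Polya--Vinogradov-type estimate
\[
\biggabs{\sum_{i=a}^{b}\leg{i}{p}\z_p^{ji}}=O(\sqrt{p}\,\log p),
\]
valid uniformly in $a$, $b$, and $j$, combined with mean-square estimates for the partial character sums that exploit cancellation beyond the worst case; careful bookkeeping through the fourth-power sum then yields the stated error. Because the proposition is explicitly flagged as an immediate corollary of~\cite[Thm.~3]{Schmidt2009} --- a theorem that already treats general rotations of sequences derived from the quadratic character, with an explicit error term --- the cleanest execution is simply to specialise that result to the ternary Legendre setting.
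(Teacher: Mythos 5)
Your closing sentence is, in fact, the entirety of the paper's proof: the proposition is stated as an immediate corollary of \cite[Thm.~3]{Schmidt2009}, with no independent argument given, so specialising that theorem is not merely the ``cleanest execution'' but the only one the paper offers. To that extent your proposal coincides with the paper.

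The from-scratch sketch that precedes it, however, has two concrete problems if you intend it as a fallback. First, the perturbation step is misquantified. Changing one entry of a length-$p$ $\pm1$ sequence changes each $C(u)$ by at most $2$, but the resulting change in $\sum_u[C(u)]^2$ is governed by the cross term $2\sum_u C(u)\delta(u)$, which by Cauchy--Schwarz is $O\big((p^2)^{1/2}\,p^{1/2}\big)=O(p^{3/2})$, not $O(p)$; after dividing by $p^2$ this is $O(p^{-1/2})$, which already swamps the claimed error $O(p^{-1}(\log p)^2)$. (It is still $o(1)$, so comparing with the binary Legendre sequence is harmless for identifying the limit --- but Theorem~\ref{thm:hj} carries no rate to transfer in any case, as you note.) Second, the crucial step --- extracting the \emph{main term} of the fourth moment of $L_s(\z_p^j)$ over odd $j$ --- is only gestured at. A Polya--Vinogradov bound $O(\sqrt{p}\log p)$ on the partial character sums yields only the upper bound $O((\log p)^4)$ for the normalised fourth moment, with no main term, and the ``Riemann sum to integral'' passage is not where the $O(p^{-1}(\log p)^2)$ error comes from. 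What is actually required (and what \cite{Schmidt2009} and the two-dimensional analogue in Section~5 of this paper carry out) is Lagrange interpolation of $L_s$ at the $p$th roots of unity, reduction of the fourth moment to complete character sums $\sum_x\leg{x}{p}\leg{x+\kappa}{p}\leg{x+\lambda}{p}\leg{x+\mu}{p}$ controlled by the Weil bound, and exact evaluation of the exponential sums as in~\eqref{eqn:exp_sum_identity}; the $(\log p)^2$ in the error term originates from the interpolation bound $\sum_{m}1/\abs{1+\z_p^{2m}}\le p\log p$, not from partial-sum estimates. These ingredients would need to be supplied for the direct proof to stand on its own.
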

\par
We note that, as explained after~\cite[Thm.~3]{Schmidt2009}, we can recover Theorem~\ref{thm:hj} from Proposition~\ref{prop:FLegendre}. We also need the following bound for the magnitude of a polynomial over $\C$ at a $(2d)$th root of unity, in terms of its magnitudes at $d$th roots of unity.
\begin{lemma}
\label{lem:poly-bound}
Let $d>1$ be odd, and let $A\in\C[x]$ have degree at most $d-1$.
Then
\[
\bigabs{A(\z_d^j)}\le(2\log d)\max_{0\le k<d}\,\bigabs{A(\z_d^{2k})}\quad\mbox{for integer $j$}.
\]
\end{lemma}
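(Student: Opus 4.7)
The plan is to apply Lagrange interpolation at the $d$th roots of unity $\omega_k := \zeta_d^{2k}$, exploiting the fact that these $d$ points determine $A$ uniquely because $\deg A \le d-1$. Using $\prod_\ell (x-\omega_\ell) = x^d - 1$ and $\prod_{\ell\ne k}(\omega_k - \omega_\ell) = d\,\omega_k^{d-1} = d\,\omega_k^{-1}$, the interpolation formula collapses to
\[
A(x) = \frac{x^d - 1}{d}\sum_{k=0}^{d-1}\frac{\omega_k\,A(\omega_k)}{x - \omega_k}.
\]
When $j$ is even, $\zeta_d^j$ coincides with some $\omega_k$, and the claim is immediate since $2\log d > 1$ for odd $d \ge 3$. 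So I restrict to odd $j$; then $(\zeta_d^j)^d = -1$, so taking absolute values and applying the triangle inequality gives
\[
\bigabs{A(\zeta_d^j)} \le \frac{2}{d}\,\max_{0\le k<d}\bigabs{A(\omega_k)}\cdot\sum_{k=0}^{d-1}\frac{1}{\bigabs{\zeta_d^j - \omega_k}}.
\]

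Next I would simplify the sum on the right. Writing $\bigabs{\zeta_d^j - \omega_k} = 2\sin(\pi m/(2d))$ with $m \equiv j - 2k \pmod{2d}$, and noting that $j$ odd forces $m$ to range over every odd residue modulo $2d$ exactly once as $k$ runs through $0,1,\dots,d-1$, the sum equals $\sum_{m\ \text{odd},\,0<m<2d}1/(2\sin(\pi m/(2d)))$. The involution $m \mapsto 2d-m$ preserves each summand and has the unique fixed point $m = d$ (which lies in range since $d$ is odd), so the sum reduces to $\tfrac{1}{2} + \sum_{m\ \text{odd},\,0<m<d}\csc(\pi m/(2d))$.

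The remaining step is to bound this partial cosecant sum by a quantity of order $\log d$. Applying the elementary inequality $\sin\theta \ge 2\theta/\pi$ on $(0,\pi/2]$ gives $\csc(\pi m/(2d)) \le d/m$, reducing the task to estimating the partial odd harmonic sum $\sum_{m\ \text{odd},\,0<m<d}1/m$, which a standard integral comparison bounds by $\log d + O(1)$. Putting everything together yields a bound of the form $(2\log d + C)\max_k\bigabs{A(\omega_k)}$, and the main obstacle is absorbing the constant $C$ into the advertised factor $2\log d$ uniformly in odd $d \ge 3$. I expect the tightest cases to be the smallest $d$ (say $d = 3, 5$), where the linear bound $\csc(\pi m/(2d)) \le d/m$ is rather wasteful; either a sharper treatment of the largest term (for instance, replacing the bound on $\csc(\pi/(2d))$ by the asymptotically correct $2d/\pi$) or a direct numerical verification of the finitely many small exceptions will then close the argument.
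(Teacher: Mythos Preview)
Your approach is essentially the same as the paper's: Lagrange interpolation at the $d$th roots of unity, the triangle inequality, and then a bound on the sum $\sum_k 1/\lvert\zeta_d^j-\omega_k\rvert$. The paper carries out the first two steps exactly as you do (writing $\zeta_d^j=-\zeta_d^{2\ell}$ for odd $j$ and reindexing so the sum becomes $\sum_{m=0}^{d-1}1/\lvert 1+\zeta_d^{2m}\rvert$), and then simply \emph{cites} the inequality
\[
\sum_{m=0}^{d-1}\frac{1}{\lvert 1+\zeta_d^{2m}\rvert}\le d\log d
\]
from Jensen, Jensen and H{\o}holdt (1991), rather than proving it. Your trigonometric rewriting $\lvert\zeta_d^j-\omega_k\rvert=2\sin(\pi m/(2d))$ with $m$ ranging over odd residues mod $2d$ is equivalent to the paper's form via $\sin(\pi(d-2m)/(2d))=\cos(\pi m/d)$.

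Where you go further than the paper is in attempting to prove the cited sum bound from scratch. That part is fine in outline, but your parenthetical remedy ``replacing the bound on $\csc(\pi/(2d))$ by the asymptotically correct $2d/\pi$'' is in the wrong direction: since $\sin\theta<\theta$ for $\theta>0$, one has $\csc(\pi/(2d))>2d/\pi$, so $2d/\pi$ is a lower bound, not an upper bound. Your other suggestion, direct verification for small $d$, does work: the crude estimate $\csc(\pi m/(2d))\le d/m$ gives a constant $\tfrac{1}{d}+2\sum_{m\ \mathrm{odd},\,0<m<d}1/m$, which exceeds $2\log d$ only at $d=3$, and for $d=3$ the exact sum is $5/2$, yielding the constant $5/3<2\log 3$.
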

\begin{proof}
The bound is trivial in the case that $j$ is even. We may therefore take $j$ to be odd, writing $j=2\ell+d$ for some integer~$\ell$ so that $\z_d^j=-\z_d^{2\ell}$. It is then sufficient to bound $\abs{A(-\z_d^{2\ell})}$. Now by Lagrange interpolation we have
\[
A(x)=\frac{1}{d}\sum_{k=0}^{d-1}\frac{x^d-1}{x-\z_d^{2k}}\,\z_d^{2k}\,A(\z_d^{2k}),
\]
and so, since $d$ is odd,
\begin{align*}
\bigabs{A(-\z_d^{2\ell})}&\le\frac{1}{d}\sum_{k=0}^{d-1}\frac{2}{\abs{\z_d^{2\ell}+\z_d^{2k}}}\bigabs{A(\z_d^{2k})}\\
&\le\frac{2}{d}\,\max_{0\le k<d}\,\bigabs{A(\z_d^{2k})}\sum_{m=0}^{d-1}\frac{1}{\abs{1+\z_d^{2m}}}.
\end{align*}
The result follows from the inequality
$\sum_{m=0}^{d-1}1/\abs{1+\z_d^{2m}}\le d\log d$ (which holds since $d$ is odd, see~\cite[p.~625]{Jensen1991}, for example).
\end{proof}
\par
The next theorem gives, under certain conditions on the growth rate of $p$ relative to $q$, the asymptotic merit factor of all $2^{p+q-1}$ Legendre arrays of size $p\times q$ for all rotations.
\begin{theorem}
\label{thm:Farray1}
Let $\mathcal{N}$ be an infinite set of products of two not necessarily distinct odd primes, and let $N$ take values only in $\mathcal{N}$. Write $N=pq$ for odd primes~$p$ and~$q$, and suppose that
\begin{equation}
\label{eqn:cond_pq}
\frac{q}{p^2}\to0\quad\mbox{and}\quad
\frac{p}{q^2}\to0\quad\mbox{as $N\to\infty$}.
\end{equation}
Let $X$ be a Legendre array of size $p\times q$, and let $s$ and $t$ be real numbers satisfying $-\tfrac{1}{2}<s,t\le\tfrac{1}{2}$. Then
\begin{equation}
\frac{1}{\lim\limits_{N\to\infty}F(X_{s,t})}=\left[\tfrac{7}{6}+8\left(\abs{s}-\tfrac{1}{4}\right)^2\right]\left[\tfrac{7}{6}+8\left(\abs{t}-\tfrac{1}{4}\right)^2\right]-1.    \label{eqn:F_Legendre}
\end{equation}
\end{theorem}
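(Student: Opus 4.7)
The plan is to apply Lemma~\ref{lem:CQ} to $X_{s,t}$ and to factorise the dominant contribution as a product of one-dimensional sums via the decomposition $X=L\times K+V$, so that Proposition~\ref{prop:FLegendre} may be invoked on each factor. Since rotation distributes over both sums and products, I would first write $X_{s,t}(x,y)=L_s(x)K_t(y)+V_{s,t}(x,y)$, where $L$ and $K$ are the ternary Legendre sequences of lengths $p$ and $q$ respectively and $V\in\V_{p,q}$. Lemma~\ref{lem:CQ} then reduces the problem to the evaluation of
\[
\frac{1}{4pq}\sum_{i=0}^{2p-1}\sum_{j=0}^{2q-1}\bigabs{L_s(\z_p^i)K_t(\z_q^j)+V_{s,t}(\z_p^i,\z_q^j)}^4.
\]

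The dominant contribution is the pure $|L_sK_t|^4$ term. By the one-variable case of Lemma~\ref{lem:CQ} combined with Proposition~\ref{prop:FLegendre},
\[
\sum_{i=0}^{2p-1}\bigabs{L_s(\z_p^i)}^4=2p^3\,\Phi(s)+O(p^2(\log p)^2),
\]
where $\Phi(s):=\tfrac{7}{6}+8(\abs{s}-\tfrac{1}{4})^2$, and similarly for $K_t$. Since the $i$ and $j$ sums separate, the pure term contributes $p^2q^2\Phi(s)\Phi(t)+o(p^2q^2)$ after dividing by $4pq$. The key technical step is to show that each of the four cross terms arising from the expansion of $|L_sK_t+V_{s,t}|^4$---those involving $|L_sK_t|^3|V_{s,t}|$, $|L_sK_t|^2|V_{s,t}|^2$, $|L_sK_t||V_{s,t}|^3$, and $|V_{s,t}|^4$---sums to $o(p^3q^3)$. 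Two uniform estimates drive this. First, at any $p$th root of unity $\omega$, the identity $L_s(\omega)=\omega^{-\lfloor ps\rfloor}L(\omega)$ together with the Gauss-sum bound $|L(\omega)|\le\sqrt{p}$ and Lemma~\ref{lem:poly-bound} with $d=p$ yield the pointwise estimate $|L_s(\z_p^i)|\le 2\sqrt{p}\log p$ for every $i$, and analogously for $K_t$. Second, Parseval applied to the $p+q-1$ unit-modulus coefficients of $V_{s,t}$ gives $\sum_{i,j}|V_{s,t}(\z_p^i,\z_q^j)|^2=4pq(p+q-1)$, while the trivial bound $|V_{s,t}(\z_p^i,\z_q^j)|\le p+q-1$ is available pointwise. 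Combining these ingredients via Cauchy-Schwarz bounds each cross-term sum by a quantity that is $o(p^3q^3)$ precisely under the growth hypothesis~\eqref{eqn:cond_pq}.

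Putting everything together yields $\sum_{u,v}[C_{X_{s,t}}(u,v)]^2=p^2q^2\Phi(s)\Phi(t)+o(p^2q^2)$; subtracting the $(u,v)=(0,0)$ contribution $(pq)^2$ and dividing by $(pq)^2$ produces the claimed limit $1/F(X_{s,t})\to\Phi(s)\Phi(t)-1$. The main obstacle is calibrating the cross-term estimates so that they vanish at exactly the rates in~\eqref{eqn:cond_pq}: the $|V_{s,t}|^4$ term is what forces both $q/p^2\to 0$ and $p/q^2\to 0$, and without the pointwise Gauss-sum bound keeping $|L_sK_t|$ at its natural $\sqrt{pq}$ scale up to logarithmic factors, the mixed terms $|L_sK_t|^3|V_{s,t}|$ and $|L_sK_t||V_{s,t}|^3$ would overwhelm the main term.
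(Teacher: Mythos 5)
Your proposal is correct and follows essentially the same route as the paper: decompose $X_{s,t}$ as $L_s\times K_t+V_{s,t}$, get the main term $\Phi(s)\Phi(t)$ from Proposition~\ref{prop:FLegendre} via the product structure, and kill the cross terms using the pointwise bound $\abs{L_s(\z_p^i)}\le 2\sqrt{p}\log p$ (Gauss sum plus Lemma~\ref{lem:poly-bound}), Parseval for $V_{s,t}$, and Cauchy--Schwarz. Your observation that the $\abs{V_{s,t}}^4$ term is what forces condition~\eqref{eqn:cond_pq} also matches the paper's final estimate.
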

\begin{proof}
Let $L$ and $K$ be the ternary Legendre sequences of length $p$ and $q$, respectively, and write $T:=L\times K$. Notice that $T_{s,t}=L_s\times K_t$ and $T_{s,t}(x,y)=L_s(x)K_t(y)$. Then from~\eqref{eqn:C_product}
\[
\sum_{u,v}[C_{T_{s,t}}(u,v)]^2=\sum_u[C_{L_s}(u)]^2\cdot \sum_v[C_{K_t}(v)]^2.
\]
The condition~\eqref{eqn:cond_pq} implies that $p$ and $q$ grow without bound as $N\to\infty$. We can therefore apply Proposition~\ref{prop:FLegendre} to give
\begin{equation}
\frac{1}{(pq)^2}\sum_{u,v}[C_{T_{s,t}}(u,v)]^2=\left[\tfrac{7}{6}+8\left(\abs{s}-\tfrac{1}{4}\right)^2\right]\left[\tfrac{7}{6}+8\left(\abs{t}-\tfrac{1}{4}\right)^2\right]+o(1)\quad\mbox{as $N\to\infty$}.   \label{eqn:ss_ternary_L_array}
\end{equation}
Define
\begin{equation}
\Delta(N):=\frac{1}{(pq)^2}\Biggabs{\sum_{u,v}[C_{X_{s,t}}(u,v)]^2-\sum_{u,v}[C_{T_{s,t}}(u,v)]^2}.   \label{eqn:def_Delta}
\end{equation}
We claim that
\begin{equation}
\Delta(N)\to 0\quad\mbox{as $N\to\infty$}.   \label{eqn:claim_Delta}
\end{equation}
The theorem then follows from~\eqref{eqn:ss_ternary_L_array} and~\eqref{eqn:def_Delta} and the fact $C_{X_{s,t}}(0,0)=pq$.
\par
It remains to prove the claim~\eqref{eqn:claim_Delta}. By the definition of a Legendre array, there exists an array $V\in\V_{p,q}$ such that $X_{s,t}=T_{s,t}+V_{s,t}$. From Lemma~\ref{lem:CQ} we then have
\begin{align}
\Delta(N)&=\frac{1}{4(pq)^3}\Biggabs{\sum_{i=0}^{2p-1}\sum_{j=0}^{2q-1}\bigabs{T_{s,t}(\z_p^i,\zeta_q^j)+V_{s,t}(\z_p^i,\zeta_q^j)}^4-\sum_{i=0}^{2p-1}\sum_{j=0}^{2q-1}\bigabs{T_{s,t}(\z_p^i,\zeta_q^j)}^4}   \nonumber\\
&\le\frac{1}{4(pq)^3}\sum_{i=0}^{2p-1}\sum_{j=0}^{2q-1}\biggabs{\bigabs{T_{s,t}(\z_p^i,\zeta_q^j)+V_{s,t}(\z_p^i,\zeta_q^j)}^4-\bigabs{T_{s,t}(\z_p^i,\zeta_q^j)}^4}.   \label{eqn:Delta_bound}
\end{align}
Now for $a,b\in\C$ the identity
\[
\abs{a+b}^4=\abs{a}^4+\abs{b}^4+4\big[\Re(a\overline{b})\big]^2+2\,\abs{a}^2\cdot\abs{b}^2+4\,\abs{a}^2\cdot\Re(a\overline{b})+4\,\abs{b}^2\cdot\Re(a\overline{b})
\]
gives the inequality
\[
\Bigabs{\abs{a+b}^4-\abs{a}^4}\le 4\,\abs{a}^3\cdot \abs{b}+6\,\abs{a}^2\cdot\abs{b}^2+4\,\abs{a}\cdot\abs{b}^3+\abs{b}^4.
\]
Apply this bound to~\eqref{eqn:Delta_bound} to obtain
\begin{align}
\Delta(N)&\le \frac{1}{(pq)^3}\sum_{i=0}^{2p-1}\sum_{j=0}^{2q-1}\bigabs{T_{s,t}(\z_p^i,\zeta_q^j)}^3\cdot\bigabs{V_{s,t}(\z_p^i,\zeta_q^j)}   \nonumber\\
&+\frac{3}{2(pq)^3}\sum_{i=0}^{2p-1}\sum_{j=0}^{2q-1}\bigabs{T_{s,t}(\z_p^i,\zeta_q^j)}^2\cdot\bigabs{V_{s,t}(\z_p^i,\zeta_q^j)}^2   \nonumber\\
&+\frac{1}{(pq)^3}\sum_{i=0}^{2p-1}\sum_{j=0}^{2q-1}\bigabs{T_{s,t}(\z_p^i,\zeta_q^j)}\cdot\bigabs{V_{s,t}(\z_p^i,\zeta_q^j)}^3   \nonumber\\
&+\frac{1}{4(pq)^3}\sum_{i=0}^{2p-1}\sum_{j=0}^{2q-1}\bigabs{V_{s,t}(\z_p^i,\zeta_q^j)}^4.   \label{eqn_Delta_bound2}
\end{align}
Given a ternary Legendre sequence $A$ of length $d$, it is well known (see~\cite[p.~182]{Schroeder1997}, for example) that $\abs{A(\z_d^{2k})}\le d^{1/2}$ for each integer $k$. It is easily verified that this implies $\abs{A_r(\z_d^{2k})}\le d^{1/2}$ for each integer $k$ and all real $r$. Therefore, since $T_{s,t}(x,y)=L_s(x)K_t(y)$, Lemma~\ref{lem:poly-bound} gives
\[
\bigabs{T_{s,t}(\z_p^i,\zeta_q^j)}\le 4(pq)^{1/2}\,\log(p+q)\quad\mbox{for all integers $i$ and $j$}.
\]
Substitute into~\eqref{eqn_Delta_bound2} to give
\begin{equation}
\Delta(N)\le 256\,\frac{(\log(p+q))^3}{(pq)^{1/2}}S_1+96\,\frac{(\log(p+q))^2}{pq}S_2+16\,\frac{\log(p+q)}{(pq)^{3/2}}S_3+\frac{1}{(pq)^2}S_4,   \label{eqn:Delta_bound3}
\end{equation}
where
\[
S_\ell:=\frac{1}{4pq}\sum_{i=0}^{2p-1}\sum_{j=0}^{2q-1}\bigabs{V_{s,t}(\z_p^i,\z_q^j)}^\ell.
\]
From~\eqref{eqn:abs_A_poly} and a straightforward calculation we obtain
\begin{align*}
S_2&=C_{V_{s,t}}(0,0)\\
&=p+q-1.
\end{align*}
The Cauchy--Schwarz inequality gives $S_1\le (S_2)^{1/2}$, and since $\abs{V_{s,t}(x,y)}\le p+q-1$ for $\abs{x}=\abs{y}=1$, we also have $S_\ell\le (p+q-1)^{\ell-2}S_2$ for $\ell\ge 2$. Substitution into~\eqref{eqn:Delta_bound3} gives
\begin{align*}
\Delta(N)&\le 256\,\left(\frac{p+q-1}{pq}\right)^{1/2}(\log(p+q))^3+96\,\frac{p+q-1}{pq}(\log(p+q))^2\\
&\quad+16\,\frac{(p+q-1)^2}{(pq)^{3/2}}\log(p+q)+\frac{(p+q-1)^3}{(pq)^2}.
\end{align*}
Now, using the condition~\eqref{eqn:cond_pq}, we readily verify our claim~\eqref{eqn:claim_Delta}.
\end{proof}
\par
There is no loss of generality in Theorem~\ref{thm:Farray1} from the restriction $-\tfrac{1}{2}<s,t\le\tfrac{1}{2}$ since $A_{s,t}$ is the same as $A_{s+1,t}$ and $A_{s,t+1}$ for every array $A$ and all real $s$ and $t$. We note that the condition~\eqref{eqn:cond_pq} can be relaxed for particular Legendre arrays. For example, let $L$ and $K$ be the Legendre sequences of length $p$ and $q$, respectively. Then $X=L\times K$ is a Legendre array. From~\eqref{eqn:F_prod} and Theorem~\ref{thm:hj} we conclude that~\eqref{eqn:F_Legendre} holds under the relaxed condition $p\to\infty$ and $q\to\infty$ as $N\to\infty$.


\section{The Merit Factor of Quadratic-Residue Arrays}

In this section our goal is to calculate the asymptotic merit factor of a quadratic-residue array of size $p\times p$ at all rotations. We shall assume throughout this section that $p$ is an odd prime. Write the $p$th roots of unity as
\[
\e_j:=e^{\sqrt{-1}\,2\pi j/p}\quad\mbox{for integer $j$}.
\]
Then, since $p$ is odd, we have
\[
\{\zeta_p^i:0\le i<2p\}=\{\e_j:0\le j<p\}\cup\{-\e_j:0\le j<p\}.
\]
Therefore, given an array $A$ of size $p\times p$, Lemma~\ref{lem:CQ} asserts that
\begin{multline}
\sum_{u,v}[C_A(u,v)]^2\\
=\frac{1}{4p^2}\sum_{0\le i,j<p}\left(\abs{A(\e_i,\e_j)}^4+\abs{A(-\e_i,\e_j)}^4+\abs{A(\e_i,-\e_j)}^4+\abs{A(-\e_i,-\e_j)}^4\right).   \label{eqn:CQ}
\end{multline}
Our objective is to find an asymptotic expression for the sum on the right-hand side of the identity~\eqref{eqn:CQ}, where $A$ is a rotated ternary quadratic-residue array. Since a ternary quadratic-residue array and a quadratic-residue array differ in only one element, this will be sufficient to compute the asymptotic merit factor of a rotated quadratic-residue array. Before we analyse the sum in~\eqref{eqn:CQ}, we shall need several technical results, which we state in the next subsection.

\subsection{Auxiliary Results}

The following lemma evaluates the generating function of a ternary quadratic-residue array at $p$th roots of unity.
\begin{lemma}
\label{lem:gauss_sum}
Let $\chi$ be the quadratic character of $\GF(p^2)$, and let $Z$ be a ternary quadratic-residue array of size $p\times p$, as defined in~\eqref{eqn:QRarray2}. Then there exists a basis $\{\beta,\beta'\}$ for $\GF(p^2)$ over $\GF(p)$ such that
\[
Z(\e_k,\e_\ell)=(-1)^{\frac{p+1}{2}}p\,\chi(k\beta+\ell\beta')
\quad\mbox{for all integers $k,\ell$}.
\]
\end{lemma}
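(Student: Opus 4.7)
The plan is to recognise $Z(\e_k,\e_\ell)$ as a quadratic Gauss sum over $\GF(p^2)$. Let $\psi(y) := e^{2\pi\sqrt{-1}\,\Tr(y)/p}$ be the standard nontrivial additive character of $\GF(p^2)$, where $\Tr$ denotes the trace from $\GF(p^2)$ to $\GF(p)$. Since the trace bilinear form $(x,y)\mapsto\Tr(xy)$ is nondegenerate (the extension is separable), the basis $\{\alpha,\alpha'\}$ admits a dual basis $\{\beta,\beta'\}$ characterised by $\Tr(\alpha\beta)=\Tr(\alpha'\beta')=1$ and $\Tr(\alpha'\beta)=\Tr(\alpha\beta')=0$. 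For $x=i\alpha+j\alpha'$ this gives $\Tr((k\beta+\ell\beta')x)=ki+\ell j$, so that, setting $\gamma:=k\beta+\ell\beta'$,
\[
Z(\e_k,\e_\ell) \;=\; \sum_{0\le i,j<p}\chi(i\alpha+j\alpha')\,e^{2\pi\sqrt{-1}(ki+\ell j)/p}\;=\;\sum_{x\in\GF(p^2)}\chi(x)\,\psi(\gamma x).
\]

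If $\gamma=0$, both sides of the claimed identity vanish (using $\sum_{x\in\GF(p^2)}\chi(x)=0$ on the left, and $\chi(0)=0$ on the right). Otherwise, the substitution $x\mapsto\gamma^{-1}x$, combined with the multiplicativity~\eqref{eqn:multiplicativity} and the fact that $\chi$ is $\pm 1$-valued on $\GF(p^2)^*$, gives
\[
Z(\e_k,\e_\ell) \;=\; \chi(\gamma)\,G, \qquad G := \sum_{x\in\GF(p^2)}\chi(x)\,\psi(x).
\]
The remaining task, and the main obstacle, is to show that $G=(-1)^{(p+1)/2}p$.

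To evaluate $G$, observe that both $\chi$ and $\leg{\,\cdot\,}{p}\!\circ\!N$ (where $N\colon\GF(p^2)^*\to\GF(p)^*$ is the norm) are nontrivial characters of order $2$ of the cyclic group $\GF(p^2)^*$, so they coincide. The Hasse--Davenport product relation for this degree-$2$ lift of the Legendre symbol then yields
\[
-G \;=\; G_p^{\,2}, \qquad G_p := \sum_{a\in\GF(p)}\leg{a}{p}\,e^{2\pi\sqrt{-1}\,a/p}.
\]
Since $G_p^{\,2}=\leg{-1}{p}p=(-1)^{(p-1)/2}p$, we conclude $G=(-1)^{(p+1)/2}p$. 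Combining this with the previous step gives the required formula, with $\{\beta,\beta'\}$ a basis of $\GF(p^2)$ over $\GF(p)$ (this being immediate from the nondegeneracy of the trace form used to define the dual basis).
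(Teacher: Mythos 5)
Your proof is correct and follows essentially the same route as the paper: pass to the dual basis $\{\beta,\beta'\}$ of $\{\alpha,\alpha'\}$ so that $Z(\e_k,\e_\ell)$ becomes the Gaussian sum $\sum_{x\in\GF(p^2)}\chi(x)\psi(\gamma x)$ with $\gamma=k\beta+\ell\beta'$, then factor out $\chi(\gamma)$ and evaluate. The only difference is that the paper simply cites the closed form $(-1)^{\frac{p+1}{2}}p\,\chi(b)$ for this sum from Lidl--Niederreiter, whereas you rederive it correctly via the identification $\chi=\leg{\cdot}{p}\circ N$ and the Hasse--Davenport lifting relation.
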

\begin{proof}
Let $\Tr:\GF(p^2)\to\GF(p)$ be the trace function given by
\[
\Tr(x)=x+x^p,
\]
and for $b\in\GF(p^2)$, let $\psi_b:\GF(p^2)\to\C$ be the additive character of $\GF(p^2)$ given by
\[
\psi_b(x):=e^{\sqrt{-1}\,2\pi\Tr(bx)/p}.
\]
It is readily verified that
\begin{equation}
\label{eqn:trace_properties}
\Tr(ax+by)=a\Tr(x)+b\Tr(y)\quad\mbox{for $a,b\in \GF(p)$.}
\end{equation}
We choose $\{\beta,\beta'\}$ such that $\{\alpha,\alpha'\}$ (appearing in the definition~\eqref{eqn:QRarray2} of $Z$) and $\{\beta,\beta'\}$ are dual bases, that is,
\begin{equation}
\label{eqn:dual_basis}
\Tr(\alpha\beta)=1,\quad\Tr(\alpha\beta')=0,\quad\Tr(\alpha'\beta)=0,\quad\Tr(\alpha'\beta')=1.
\end{equation}
Such a basis is guaranteed to exist~\cite[p.~58]{Lidl1997}. Given integers $i,j,k,\ell$, we then have by~\eqref{eqn:trace_properties} and~\eqref{eqn:dual_basis}
\[
\Tr((i\alpha+j\alpha')(k\beta+\ell\beta'))=ik+j\ell,
\]
and therefore
\begin{align*}
Z(\e_k,\e_\ell)&=\sum_{0\le i,j<p}\chi(i\alpha+j\alpha')e^{\sqrt{-1}\,2\pi (ik+j\ell)/p}\\
&=\sum_{0\le i,j<p}\chi(i\alpha+j\alpha')e^{\sqrt{-1}\,2\pi \Tr((i\alpha+j\alpha')(k\beta+\ell\beta'))/p}\\
&=\sum_{a\in\GF(p^2)}\chi(a)\psi_{k\beta+\ell\beta'}(a)
\end{align*}
by putting $a:=i\alpha+j\alpha'$. The above sum is called a \emph{Gaussian sum}, and it is well known that
\[
\sum_{a\in\GF(p^2)}\chi(a)\psi_b(a)=(-1)^{\frac{p+1}{2}}p\,\chi(b)
\]
(see~\cite[pp.~199--201]{Lidl1997}, for example). This proves the lemma.
\end{proof}
\par
Our next lemma bounds a certain character sum and evaluates it in special cases.
\begin{lemma}
Let $\chi$ be the quadratic character of $\GF(p^2)$, and define
\begin{equation}
\Omega(\kappa,\lambda,\mu):=\sum_{x\in\GF(p^2)}\chi(x)\chi(x+\kappa)\chi(x+\lambda)\chi(x+\mu)\quad\mbox{for $\kappa,\lambda,\mu\in\GF(p^2)$}   \label{eqn:def_Omega}
\end{equation}
and
\begin{equation}
I:=\{(\kappa,\kappa,0):\kappa\in\GF(p^2)\}\cup\{(\kappa,0,\kappa):\kappa\in\GF(p^2)\}\cup\{(0,\kappa,\kappa):\kappa\in\GF(p^2)\}.   \label{eqn:def_I}
\end{equation}
Then
\begin{equation}
\Omega(\kappa,\kappa,0)=\begin{cases}
p^2-1 & \mbox{for $\kappa=0$}\\
p^2-2 & \mbox{for $\kappa\ne 0$}
\end{cases}   \label{eqn:Omegakk0}
\end{equation}
and
\begin{equation}
\bigabs{\Omega(\kappa,\lambda,\mu)}\le 3p\quad\mbox{for $(\kappa,\lambda,\mu)\not\in I$}.   \label{eqn:Omega_bound}
\end{equation}
\end{lemma}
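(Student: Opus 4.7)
The plan is to express $\Omega(\kappa,\lambda,\mu)$ as a character sum evaluated along a single quartic polynomial and then to invoke the Weil bound. By the multiplicativity~\eqref{eqn:multiplicativity} of $\chi$, setting
\[
f(x):=x(x+\kappa)(x+\lambda)(x+\mu),
\]
we have $\Omega(\kappa,\lambda,\mu)=\sum_{x\in\GF(p^2)}\chi(f(x))$.

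To establish~\eqref{eqn:Omegakk0} I would specialise to $(\kappa,\lambda,\mu)=(\kappa,\kappa,0)$, in which case $f(x)=x^2(x+\kappa)^2$. Then $\chi(f(x))=\chi(x)^2\chi(x+\kappa)^2$ equals $1$ whenever $x\notin\{0,-\kappa\}$ and vanishes otherwise. Counting the contributing $x\in\GF(p^2)$ then yields $p^2-1$ when $\kappa=0$ (one exclusion) and $p^2-2$ when $\kappa\ne 0$ (two exclusions), giving both cases of~\eqref{eqn:Omegakk0}.

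For~\eqref{eqn:Omega_bound} the goal is to apply the Weil bound for character sums, which requires that $f$ is not a constant multiple of the square of a polynomial in $\GF(p^2)[x]$. I would first verify that the exceptional set $I$ defined in~\eqref{eqn:def_I} is \emph{exactly} the set of triples $(\kappa,\lambda,\mu)$ for which $f$ is a square. Since $f$ is monic of degree $4$ with roots (counted with multiplicity) forming the multiset $\{0,-\kappa,-\lambda,-\mu\}$, it is a square in $\GF(p^2)[x]$ if and only if this multiset partitions into two pairs of equal elements. Enumerating the three possible pairings of the root $0$ with $-\kappa$, $-\lambda$, or $-\mu$, one finds that this happens precisely when $(\kappa,\lambda,\mu)\in I$. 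Hence for $(\kappa,\lambda,\mu)\not\in I$ the polynomial $f$ is not a constant times a perfect square, and the Weil bound (see, e.g.,~\cite{Lidl1997}) applied to the quadratic character of $\GF(p^2)$ and to $f$ of degree $4$ yields
\[
\bigabs{\Omega(\kappa,\lambda,\mu)}\le(4-1)\sqrt{p^2}=3p,
\]
as required.

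The main obstacle is the case analysis showing that the triples for which $f$ is a perfect square are \emph{exactly} those in $I$; once this is verified, the bound in~\eqref{eqn:Omega_bound} is an immediate consequence of a standard estimate, and the evaluations in~\eqref{eqn:Omegakk0} follow by direct counting.
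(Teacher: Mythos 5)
Your proposal is correct and follows essentially the same route as the paper: express $\Omega$ as $\sum_x\chi(f(x))$ with $f(x)=x(x+\kappa)(x+\lambda)(x+\mu)$, evaluate the degenerate case $(\kappa,\kappa,0)$ by direct counting using $[\chi(x)]^2=1$ for $x\ne 0$, and apply the Weil bound~\cite[Thm.~5.41]{Lidl1997} to the non-square $f$ to get $(4-1)p=3p$. The only difference is that you spell out the root-multiplicity case analysis showing $f$ is a square exactly for triples in $I$, which the paper asserts without proof; your verification of that step is sound.
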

\begin{proof}
Since $\chi$ is multiplicative by~\eqref{eqn:multiplicativity},
\begin{align*}
\Omega(\kappa,\kappa,0)&=\sum_{x\in\GF(p^2)}[\chi(x(x+\kappa))]^2\\
&=\sum_{x\in\GF(p^2)\setminus\{0,-\kappa\}}1
\end{align*}
using $\chi(0)=0$ and $[\chi(x)]^2=1$ for all nonzero $x\in\GF(p^2)$. This proves~\eqref{eqn:Omegakk0}.
\par
To prove~\eqref{eqn:Omega_bound}, we use a special case of a result on multiplicative character sums with polynomial arguments~\cite[Thm.~5.41]{Lidl1997}, which can be stated as follows. If $f\in\GF(p^2)[x]$ is a monic polynomial of positive degree $d$ that is not a square (that is, $f(x)$ cannot be written as $f(x)=[g(x)]^2$ for some polynomial $g\in\GF(p^2)[x]$), then
\begin{equation}
\Biggabs{\sum_{x\in\GF(p^2)}\chi(f(x))}\le (d-1)p.   \label{eqn:char_sum}
\end{equation}
For all $(\kappa,\lambda,\mu)\not\in I$, the polynomial $f(x):=x(x+\kappa)(x+\lambda)(x+\mu)$ is not a square. Using the multiplicativity~\eqref{eqn:multiplicativity} of $\chi$, application of~\eqref{eqn:char_sum} gives~\eqref{eqn:Omega_bound}.
\end{proof}
\par
The following technical lemma can be obtained from the results of~\cite{Hoholdt1988} and~\cite{Jensen1991}.
\begin{lemma}
\label{lem:part_frac_exp}
Define
\begin{equation}
\Gamma(k,\ell,m):=\sum_{i=0}^{p-1} \frac{\e_i^2}{(1+\e_i)(\e_k+\e_i)(\e_\ell+\e_i)(\e_m+\e_i)}\quad\mbox{for integer $k,\ell,m$}.   \label{eqn:def_Gamma}
\end{equation}
Then
\begin{equation}
\Gamma(k,k,0)= 
\begin{cases}
\dfrac{p^2(p^2+2)}{48}
			& \mbox{for $k\equiv 0\pmod p$} 				\\[2ex]
\dfrac{p^2}{2} \cdot \dfrac{1}{\e_k\,\abs{1-\e_k}^2}
			& \mbox{for $k\not\equiv 0\pmod p$}
\end{cases}   \label{eqn:Gammakk0}
\end{equation}
and
\begin{equation}
\sum_{0\le k,\ell,m<p}\bigabs{\Gamma(k,\ell,m)}\le (p\log p)^4.   \label{eqn:Gamma_bound}
\end{equation}
\end{lemma}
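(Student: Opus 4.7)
The plan is to treat the two parts separately via a partial-fraction calculus on the unit circle. The workhorse identity is
\[
\sum_{i=0}^{p-1}\frac{1}{w-\e_i}=\frac{pw^{p-1}}{w^p-1},
\]
obtained by logarithmic differentiation of $\prod_i(z-\e_i)=z^p-1$, together with its derivative which supplies $\sum_i 1/(w-\e_i)^2$. Since $p$ is odd we have $(-1)^p=-1$ and $\e_k^p=1$, so evaluation at $w=-1$ and $w=-\e_k$ is non-degenerate and yields in particular $\sum_i 1/(1+\e_i)=p/2$, $\sum_i 1/(\e_k+\e_i)=p/(2\e_k)$, and closed forms of the same shape for the squared denominators.

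For the explicit formula (\ref{eqn:Gammakk0}), I would decompose the summand of (\ref{eqn:def_Gamma}) at $\ell=k,\,m=0$ into partial fractions. When $k\not\equiv 0\pmod p$ there are two distinct double poles at $z=-1$ and $z=-\e_k$, yielding four residue coefficients computed in the standard way. Substituting $z=\e_i$ and summing term by term via the four closed forms above, after a modest amount of algebra, collapses the result to $-p^2/[2(1-\e_k)^2]$. The elementary identity $\e_k\abs{1-\e_k}^2=-(1-\e_k)^2$ (a direct expansion using $\overline{\e_k}=\e_k^{-1}$) then rewrites this as the stated $p^2/[2\e_k\abs{1-\e_k}^2]$. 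The case $k\equiv 0\pmod p$, where the summand becomes $\e_i^2/(1+\e_i)^4$, I would handle by expanding $\e_i^2=((1+\e_i)-1)^2$ and reducing to a linear combination of $\sum_i 1/(1+\e_i)^r$ for $r=2,3,4$, each of which comes from iterated differentiation of the master identity; the outcome should be $p^2(p^2+2)/48$, verifiable at $p=3$.

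For the bound (\ref{eqn:Gamma_bound}), the argument is much softer. The triangle inequality applied to (\ref{eqn:def_Gamma}) gives
\[
\bigabs{\Gamma(k,\ell,m)}\le\sum_{i=0}^{p-1}\frac{1}{\abs{1+\e_i}\,\abs{\e_k+\e_i}\,\abs{\e_\ell+\e_i}\,\abs{\e_m+\e_i}}.
\]
Summing over $(k,\ell,m)$ and exchanging the order of summation isolates three inner sums of the form $\sum_k 1/\abs{\e_k+\e_i}$. The reindexing $k\mapsto k-i\bmod p$, combined with $\abs{\e_k+\e_i}=\abs{1+\e_{k-i}}$, shows that each such inner sum equals $\sum_j 1/\abs{1+\e_j}$, which is at most $p\log p$ by the inequality invoked in the proof of Lemma~\ref{lem:poly-bound}. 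A fourth application of the same bound to the remaining outer sum $\sum_i 1/\abs{1+\e_i}$ yields $(p\log p)^4$.

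The main obstacle is the bookkeeping for the explicit evaluation in the first part: four residue coefficients must be computed, four closed-form sums evaluated (two of them via the differentiated master identity), and the resulting rational function of $\e_k$ simplified down to the compact form in the statement; the separate treatment of $k\equiv 0\pmod p$ is structurally similar but algebraically distinct. By contrast, (\ref{eqn:Gamma_bound}) is essentially immediate once the reindexing trick and the Jensen bound quoted in Lemma~\ref{lem:poly-bound} are in hand.
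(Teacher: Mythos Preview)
Your proposal is correct and, for the bound~\eqref{eqn:Gamma_bound}, matches the paper's argument exactly: triangle inequality, reindex each inner sum via $\abs{\e_k+\e_i}=\abs{1+\e_{k-i}}$, and apply the Jensen estimate $\sum_j 1/\abs{1+\e_j}\le p\log p$ four times. For the identity~\eqref{eqn:Gammakk0} the paper does not carry out any computation at all but simply cites H{\o}holdt--Jensen~\cite[p.~162, Cases~4 and~5]{Hoholdt1988}; your partial-fraction scheme (master identity $\sum_i 1/(w-\e_i)=pw^{p-1}/(w^p-1)$ and its derivatives, separate treatment of the double-pole case $k\not\equiv 0$ and the quadruple-pole case $k\equiv 0$) is precisely the calculation that reference performs, so your route is the same in substance, only made explicit rather than deferred.
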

\begin{proof}
The identity~\eqref{eqn:Gammakk0} was established in~\cite[p.~162, Cases~4 and 5]{Hoholdt1988}. The bound~\eqref{eqn:Gamma_bound} follows from the inequality $\sum_{i=0}^{p-1}1/\abs{1+\e_i}\le p\log p$ (see~\cite[p.~625]{Jensen1991}, for example) since
\begin{align*}
\sum_{0\le k,\ell,m<p}\bigabs{\Gamma(k,\ell,m)}&\le \sum_{0<i,k,\ell,m<p}\biggabs{\frac{\e^2_i}{(1+\e_i)(\e_k+\e_i)(\e_\ell+\e_i)(\e_m+\e_i)}}\\
&=\left(\sum_{i=0}^{p-1}\frac{1}{\abs{1+\e_i}}\right)^4.   \qedhere
\end{align*}
\end{proof}


\subsection{Asymptotic Merit Factor Calculation}

We are now in a position to analyse asymptotic behaviour of the sum on the right-hand side of the identity~\eqref{eqn:CQ}, where $A$ is a rotated ternary quadratic-residue array. We split the analysis into the following three lemmas.
\begin{lemma}
\label{lem:Qsum0}
Let $Z$ be a ternary quadratic-residue array of size $p\times p$, and let $s$ and $t$ be real numbers satisfying $-\tfrac{1}{2}<s,t\le\tfrac{1}{2}$. Then, as $p\to\infty$,
\begin{align*}
\frac{1}{4p^6}\sum_{0\le i,j<p}\bigabs{Z_{s,t}(\e_i,\e_j)}^4&=\tfrac{1}{4}+O(p^{-2}).
\end{align*}
\end{lemma}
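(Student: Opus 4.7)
The plan is to show that rotation by $\lfloor ps\rfloor$ and $\lfloor pt\rfloor$ units contributes only a unimodular factor when the generating function is evaluated at $p$th roots of unity, so that $|Z_{s,t}(\e_i,\e_j)| = |Z(\e_i,\e_j)|$. Writing $a:=\lfloor ps\rfloor$ and $b:=\lfloor pt\rfloor$, the substitution $i'=(i+a)\bmod p$, $j'=(j+b)\bmod p$ in the defining sum of $Z_{s,t}(x,y)$ shows that
\[
Z_{s,t}(\e_k,\e_\ell)=\e_k^{-a}\e_\ell^{-b}\,Z(\e_k,\e_\ell),
\]
because $\e_k^{(i'-a)\bmod p}=\e_k^{i'-a}$ when $\e_k$ is a $p$th root of unity. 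Hence $|Z_{s,t}(\e_i,\e_j)|=|Z(\e_i,\e_j)|$, and the rotation parameters $s$ and $t$ disappear from the quantity to be estimated.

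Next I would invoke Lemma~\ref{lem:gauss_sum} to compute $|Z(\e_i,\e_j)|$ explicitly. That lemma furnishes a basis $\{\beta,\beta'\}$ of $\GF(p^2)$ over $\GF(p)$ for which
\[
Z(\e_i,\e_j)=(-1)^{\frac{p+1}{2}}\,p\,\chi(i\beta+j\beta').
\]
Since $\{\beta,\beta'\}$ is a basis, $i\beta+j\beta'=0$ exactly when $(i,j)\equiv(0,0)\pmod p$, and otherwise $i\beta+j\beta'$ is a nonzero element of $\GF(p^2)$, so $|\chi(i\beta+j\beta')|=1$. Therefore
\[
\bigabs{Z(\e_i,\e_j)}^4=\begin{cases}0&\text{for }(i,j)=(0,0),\\ p^4&\text{otherwise.}\end{cases}
\]

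Summing over $0\le i,j<p$ yields $\sum_{0\le i,j<p}\bigabs{Z_{s,t}(\e_i,\e_j)}^4=(p^2-1)p^4$, and dividing by $4p^6$ gives $\tfrac{1}{4}-\tfrac{1}{4p^2}$, which is $\tfrac{1}{4}+O(p^{-2})$. There is no genuine obstacle here: the entire argument is an invocation of Lemma~\ref{lem:gauss_sum} together with the elementary observation that a cyclic rotation of an array of dimensions $p\times p$ only rotates the phases of its generating function at $p$th roots of unity. The lemma is, in effect, the ``easy'' quadrant of~\eqref{eqn:CQ} in which both arguments are \emph{unsigned} $p$th roots of unity; the subsequent lemmas will handle the mixed and doubly-signed quadrants, where the Gauss-sum evaluation of Lemma~\ref{lem:gauss_sum} is no longer available and the character-sum bounds~\eqref{eqn:Omega_bound} together with the partial-fraction identities in Lemma~\ref{lem:part_frac_exp} will be required.
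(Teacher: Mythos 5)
Your proof is correct and follows the paper's argument exactly: the rotation identity $Z_{s,t}(\e_i,\e_j)=\e_i^{-\lfloor ps\rfloor}\e_j^{-\lfloor pt\rfloor}Z(\e_i,\e_j)$ combined with the Gauss-sum evaluation of Lemma~\ref{lem:gauss_sum} gives $\abs{Z_{s,t}(\e_i,\e_j)}=p$ except at $(i,j)=(0,0)$, yielding the value $\tfrac{1}{4}(1-p^{-2})$. Nothing further is needed.
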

\begin{proof}
Lemma~\ref{lem:gauss_sum} implies
\[
\bigabs{Z(\e_i,\e_j)}=\begin{cases}
0 & \quad\mbox{for $i\equiv j\equiv 0\pmod p$}\\
p & \quad\mbox{otherwise}.
\end{cases}
\]
Then, using the easily verified identity
\begin{equation}
Z_{s,t}(\e_i,\e_j)=\e_i^{-\lfloor ps\rfloor} \e_j^{-\lfloor pt\rfloor} Z(\e_i,\e_j),   \label{eqn:Xst_X}
\end{equation}
we find that
\[
\frac{1}{4p^6}\sum_{0\le i,j<p}\bigabs{Z_{s,t}(\e_i,\e_j)}^4=\frac{1}{4}\left(1-\frac{1}{p^2}\right),
\]
as required.
\end{proof}
\par
\begin{lemma}
\label{lem:Qsum1}
Let $Z$ be a ternary quadratic-residue array of size $p\times p$, and let $s$ and $t$ be real numbers satisfying $-\tfrac{1}{2}<s,t\le\tfrac{1}{2}$. Then, as $p\to\infty$,
\begin{align*}
\frac{1}{4p^6}\sum_{0\le i,j<p}\bigabs{Z_{s,t}(-\e_i,\e_j)}^4&=\tfrac{1}{3}+4(\abs{s}-\tfrac{1}{4})^2+O\left(p^{-1}(\log p)^4\right),\\
\frac{1}{4p^6}\sum_{0\le i,j<p}\bigabs{Z_{s,t}(\e_i,-\e_j)}^4&=\tfrac{1}{3}+4(\abs{t}-\tfrac{1}{4})^2+O\left(p^{-1}(\log p)^4\right).
\end{align*}
\end{lemma}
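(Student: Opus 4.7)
By symmetry (swap the roles of the two axes, which amounts to relabeling $i\leftrightarrow j$, $\alpha\leftrightarrow\alpha'$, $\beta\leftrightarrow\beta'$, and $s\leftrightarrow t$), the second identity follows from the first, so the plan is to prove the first. The strategy mirrors the one-dimensional approach of H{\o}holdt and Jensen. Since $Z_{s,t}(x,y)$ has degree at most $p-1$ in $x$, Lagrange interpolation at the points $\e_0,\dots,\e_{p-1}$ applied to the polynomial $x\mapsto Z_{s,t}(x,\e_j)$ yields
\[
Z_{s,t}(-\e_i,\e_j)=\frac{2}{p}\sum_{k=0}^{p-1}\frac{\e_k\,Z_{s,t}(\e_k,\e_j)}{\e_i+\e_k},
\]
using $(-\e_i)^p=-1$ (since $p$ is odd). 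Substituting~\eqref{eqn:Xst_X} together with Lemma~\ref{lem:gauss_sum} gives
\[
Z_{s,t}(-\e_i,\e_j)=2(-1)^{\frac{p+1}{2}}\e_j^{-\lfloor pt\rfloor}\sum_{k=0}^{p-1}\frac{\e_k^{\,1-\lfloor ps\rfloor}\,\chi(k\beta+j\beta')}{\e_i+\e_k}.
\]

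Write $u:=1-\lfloor ps\rfloor$. Take the fourth power, sum over $0\le i,j<p$, and expand into a quadruple sum over $(k_1,k_2,k_3,k_4)$. The $i$-dependence then factors out: using $\overline{\e_i+\e_k}=(\e_i+\e_k)/(\e_i\e_k)$ followed by the substitution $i\mapsto i-k_1$, the inner sum over $i$ equals $\e_{k_1}^{-2}\Gamma(k_2-k_1,k_3-k_1,k_4-k_1)$ (with $\Gamma$ as in Lemma~\ref{lem:part_frac_exp}). The inner sum over $j$ equals the character sum $\sum_{j=0}^{p-1}\prod_{r=1}^{4}\chi(k_r\beta+j\beta')$, which runs over the affine line $k_1\beta+\GF(p)\beta'$ of $\GF(p^2)$. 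Reparametrise by $k:=k_1$, $\ell:=k_2-k_1$, $m:=k_3-k_1$, $n:=k_4-k_1$: since $\{\beta,\beta'\}$ is a basis, summing over $k\in\GF(p)$ sweeps out all of $\GF(p^2)$, so we obtain exactly $\Omega(\ell\beta,m\beta,n\beta)$. All remaining phase factors are independent of $k$, and after a short calculation
\[
\sum_{0\le i,j<p}\bigabs{Z_{s,t}(-\e_i,\e_j)}^4
=16\sum_{\ell,m,n}\Omega(\ell\beta,m\beta,n\beta)\,\e_{-\ell+m-n}^{\,u}\,\e_{\ell+n}\,\Gamma(\ell,m,n).
\]

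Split this sum according to whether $(\ell\beta,m\beta,n\beta)\in I$ (the \emph{diagonal} triples). For the off-diagonal part, the bound $\abs{\Omega}\le 3p$ combined with the bound $\sum_{\ell,m,n}\abs{\Gamma(\ell,m,n)}\le(p\log p)^4$ from~\eqref{eqn:Gamma_bound} contributes $O(p^5(\log p)^4)$, which after division by $4p^6$ gives the claimed error $O(p^{-1}(\log p)^4)$. For the diagonal part, $I$ meets $\GF(p)^3$ in the three lines $\{(\kappa,\kappa,0)\},\{(\kappa,0,\kappa)\},\{(0,\kappa,\kappa)\}$ (with $\kappa\in\GF(p)$, intersecting only at the origin); in each case $\Omega$ equals $p^2-1$ or $p^2-2$ by~\eqref{eqn:Omegakk0}, and $\Gamma$ reduces by symmetry to $\Gamma(\kappa,\kappa,0)$, whose explicit value is given by~\eqref{eqn:Gammakk0}. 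The first and third lines carry the phase $\e_\kappa$ and each contribute the same $\kappa$-independent quantity; using $\sum_{\kappa=1}^{p-1}\abs{1-\e_\kappa}^{-2}=(p^2-1)/12$, they combine to leading order to produce a constant. The middle line carries the $s$-dependent phase $\e_\kappa^{2\lfloor ps\rfloor-1}$ and requires evaluating
\[
\sum_{\kappa=1}^{p-1}\frac{\e_\kappa^{\,r}}{\abs{1-\e_\kappa}^2}=\frac{p^2}{2}\,B_2\!\left(\{r/p\}\right)-\tfrac{1}{12}
\]
for $r=2\lfloor ps\rfloor-1$, where $B_2(x)=x^2-x+\tfrac{1}{6}$; this identity is a standard consequence of partial fractions and the Fourier expansion of the periodic Bernoulli function. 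Since $\lfloor ps\rfloor/p\to s$, one checks $B_2(\{r/p\})\to 4s^2-2\abs{s}+\tfrac{1}{6}$ uniformly in the sign of $s$, and adding the three diagonal contributions yields $\tfrac{7}{12}+4\abs{s}^2-2\abs{s}=\tfrac{1}{3}+4(\abs{s}-\tfrac{1}{4})^2$ after division by $4p^6$, as required. The main obstacle is the bookkeeping of the three diagonal pairings (with inclusion-exclusion at the origin) together with the Bernoulli-polynomial evaluation; the off-diagonal bound is straightforward once Lemmas~\ref{lem:gauss_sum} and~\ref{lem:part_frac_exp} are in hand.
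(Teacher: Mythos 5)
Your proposal is correct and follows essentially the same route as the paper: Lagrange interpolation at the $p$th roots of unity, the Gauss-sum evaluation of Lemma~\ref{lem:gauss_sum}, reduction of the quadruple sum to $\Omega$ and $\Gamma$, the split along the diagonal set $I$ with the off-diagonal part bounded via $\abs{\Omega}\le 3p$ and~\eqref{eqn:Gamma_bound}, and explicit evaluation on the three diagonal lines. Your Bernoulli-polynomial identity is just a restatement of~\eqref{eqn:exp_sum_identity}, so the only differences from the paper's proof are cosmetic bookkeeping.
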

\begin{proof}
Given an array $A$, let $A^T$ denote the transpose of $A$. Since $(Z_{s,t})^T=(Z^T)_{t,s}$ and $Z^T$ is again a ternary quadratic-residue array, it is sufficient to prove the first statement in the lemma.
\par
Let $P\in\C[x]$ be a polynomial of degree $p-1$, and let $i$ and $j$ be integer. We shall make use of the Lagrange interpolation formula
\begin{align}
P(-\e_i) & = \frac{2}{p} \sum_{k=0}^{p-1} P(\e_k) \frac{\e_k}{\e_k+\e_i}   \label{eqn:interpolation}\\
\intertext{(see \cite[Eq.~(2.5)]{Hoholdt1988}, for example). It follows that}
Z_{s,t}(-\e_i,\e_j) & = \frac{2}{p}\sum_{k=0}^{p-1} Z_{s,t}(\e_k,\e_j) \frac{\e_k}{\e_k+\e_i}.   \nonumber
\intertext{Set $S:=\lfloor ps\rfloor$ and $T:=\lfloor pt\rfloor$, and use~\eqref{eqn:Xst_X} and Lemma~\ref{lem:gauss_sum} to obtain}
Z_{s,t}(-\e_i,\e_j) & = 2(-1)^{\frac{p+1}{2}} \e_j^{-T} \sum_{k=0}^{p-1} 
\e_k^{-S} \chi(k\beta+j\beta') \frac{\e_k}{\e_k+\e_i},    \nonumber
\intertext{where $\chi$ is the quadratic character of $\GF(p^2)$ and $\{\beta,\beta'\}$ is some basis for $\GF(p^2)$ over $\GF(p)$. Since we also have}
\overline{Z_{s,t}(-\e_i,\e_j)} & = 
2(-1)^{\frac{p+1}{2}} \e_j^T \sum_{k=0}^{p-1} \e_k^S 
\chi(k\beta+j\beta') \frac{\e_i}{\e_k+\e_i}, 		\nonumber
\end{align}
we find that
\begin{multline*}
\bigabs{Z_{s,t}(-\e_i,\e_j)}^4=16\!\!\!\!\sum_{0\le a,b,c,d<p}\!\!\!\!\e_{b-a}^S\e_{d-c}^S\,\chi(a\beta+j\beta')\chi(b\beta+j\beta')\chi(c\beta+j\beta')\chi(d\beta+j\beta')\\
\frac{\e_a}{\e_a+\e_i}\frac{\e_i}{\e_b+\e_i}\frac{\e_c}{\e_c+\e_i}\frac{\e_i}{\e_d+\e_i}.
\end{multline*}
Use the definition~\eqref{eqn:def_Gamma} of $\Gamma(k,\ell,m)$ to write
\begin{equation}
\sum_{i=0}^{p-1}\frac{\e_a}{\e_a+\e_i}\frac{\e_i}{\e_b+\e_i}\frac{\e_c}{\e_c+\e_i}\frac{\e_i}{\e_d+\e_i}=\e_{c-a}\,\Gamma(b-a,c-a,d-a),   \label{eqn:Gamma_applied}
\end{equation}
so that after variable relabeling
\begin{multline*}
\sum_{i=0}^{p-1}\bigabs{Z_{s,t}(-\e_i,\e_j)}^4=16\sum_{0\le a,k,\ell,m<p}\e_{k-\ell+m}^S\e_\ell\\
\chi(a\beta+j\beta')\chi(a\beta+j\beta'+k\beta)\chi(a\beta+j\beta'+\ell\beta)\chi(a\beta+j\beta'+m\beta)\,\Gamma(k,\ell,m).
\end{multline*}
Put $x:=a\beta+j\beta'$ and note that $a\beta+j\beta'$ ranges over $\GF(p^2)$ as $a$ and $j$ range from $0$ to $p-1$. By the definition~\eqref{eqn:def_Omega} of $\Omega(\kappa,\lambda,\mu)$ we therefore obtain
\[
\sum_{0\le i,j<p}\bigabs{Z_{s,t}(-\e_i,\e_j)}^4=16\sum_{0\le k,\ell,m<p}\e_{k-\ell+m}^S\e_\ell\;\Omega(k\beta,\ell\beta,m\beta)\,\Gamma(k,\ell,m),
\]
Let the set $I$ be as defined in~\eqref{eqn:def_I}, and write
\begin{equation}
\frac{1}{4p^6}\sum_{0\le i,j<p}\bigabs{Z_{s,t}(-\e_i,\e_j)}^4=A+B,   \label{eqn:split_sum_1}
\end{equation}
where
\begin{align*}
A&=\frac{4}{p^6}\sum_{\stack{0\le k,\ell,m<p}{(k\beta,\ell\beta,m\beta)\not\in I}}\e_{k-\ell+m}^S\e_\ell\;\Omega(k\beta,\ell\beta,m\beta)\,\Gamma(k,\ell,m)\\
B&=\frac{4}{p^6}\sum_{\stack{0\le k,\ell,m<p}{(k\beta,\ell\beta,m\beta)\in I}}\e_{k-\ell+m}^S\e_\ell\;\Omega(k\beta,\ell\beta,m\beta)\,\Gamma(k,\ell,m).
\end{align*}
Using~\eqref{eqn:Omega_bound}, the sum $A$ can be bounded as
\begin{align}
\bigabs{A}
&\le \frac{12}{p^5}\sum_{0\le k,\ell,m<p}\bigabs{\Gamma(k,\ell,m)}   \nonumber\\
&\le12\,p^{-1}(\log p)^4   \nonumber
\intertext{by~\eqref{eqn:Gamma_bound}. Therefore,}
A&=O(p^{-1}(\log p)^4)\quad\mbox{as $p\to\infty$}.   \label{eqn:bound_S1}
\end{align}
Using symmetry of $\Omega(\kappa,\lambda,\mu)$ and $\Gamma(k,\ell,m)$ with respect to interchanging their arguments, the sum $B$ can be written as
\[
B=\frac{4}{p^6}\,\Omega(0,0,0)\,\Gamma(0,0,0)+\frac{4}{p^6}\,\sum_{k=1}^{p-1}\Omega(k\beta,k\beta,0)\Gamma(k,k,0)\left(2\e_k+\e_k^{2S}\right).
\]
Application of~\eqref{eqn:Omegakk0} and~\eqref{eqn:Gammakk0} to evaluate $\Omega(k\beta,k\beta,0)$ and $\Gamma(k,k,0)$ then gives
\begin{equation}
B=\frac{(p^2+2)(p^2-1)}{12p^4}+\frac{2(p^2-2)}{p^4}\sum_{k=1}^{p-1}\frac{2+\e_k^{2S-1}}{\abs{1-\e_k}^2}.   \label{eqn:sumQ_subst}
\end{equation}
We wish to apply the identity 
\begin{equation}
\sum_{k=1}^{p-1} \frac{\e_k^j}{\abs{1-\e_k}^2}
= \frac{p^2}{2} \left (\frac{\abs{j}}{p} - \frac{1}{2} \right )^2 
					 - \frac{p^2+2}{24}
			\quad \mbox{for integer $j$ satisfying $\abs{j} \le p$}
\label{eqn:exp_sum_identity}
\end{equation}
(see \cite[p.~621]{Jensen1991}, for example). Since $\tfrac{1}{2}<s\le \tfrac{1}{2}$, we have $\abs{2S-1}\le p$ for all sufficiently large~$p$, which allows us to apply~\eqref{eqn:exp_sum_identity} to~\eqref{eqn:sumQ_subst} to obtain, for all sufficiently large~$p$,
\[
B=\frac{(p^2+2)(p^2-1)}{12p^4}+\frac{(p^2-2)^2}{4p^4}+\frac{p^2-2}{p^2}\left(\frac{\abs{2S-1}}{p}-\frac{1}{2}\right)^2.
\]
By the definition of $S$, we have $S=ps+O(1)$ as $p\to\infty$, and therefore,
\begin{equation}
B=\tfrac{1}{3}+4\left(\abs{s}-\tfrac{1}{4}\right)^2+O(p^{-1})\quad\mbox{as $p\to\infty$}.   \label{eqn:S2}
\end{equation}
The proof is completed by substituting~\eqref{eqn:bound_S1} and~\eqref{eqn:S2} in~\eqref{eqn:split_sum_1}.
\end{proof}
\par
\begin{lemma}
\label{lem:Qsum2}
Let $Z$ be a ternary quadratic-residue array of size $p\times p$, and let $s$ and $t$ be real numbers satisfying $-\tfrac{1}{2}<s,t\le\tfrac{1}{2}$. Then, as $p\to\infty$,
\[
\frac{1}{4p^6}\sum_{0\le i,j<p}\bigabs{Z_{s,t}(-\e_i,-\e_j)}^4=\tfrac{4}{9}+64(\abs{s}-\tfrac{1}{4})^2(\abs{t}-\tfrac{1}{4})^2+O\left(p^{-1}(\log p)^8\right).
\]
\end{lemma}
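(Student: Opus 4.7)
The plan is to follow the strategy of the proof of Lemma~\ref{lem:Qsum1}, now applying the Lagrange interpolation formula~\eqref{eqn:interpolation} in both variables at once. First I would combine this two-variable interpolation with~\eqref{eqn:Xst_X} and Lemma~\ref{lem:gauss_sum} to express $Z_{s,t}(-\e_i,-\e_j)$ as a double sum over $(k,\ell)$ with character factor $\chi(k\beta+\ell\beta')$, rotation phases $\e_k^{-S}\e_\ell^{-T}$ (where $S:=\lfloor ps\rfloor$ and $T:=\lfloor pt\rfloor$), and interpolation kernels $\e_k\e_\ell/[(\e_k+\e_i)(\e_\ell+\e_j)]$. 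Raising this to the fourth power produces an eightfold sum in variables $(a,b,c,d)$ (coupled to $\e_i$) and $(a',b',c',d')$ (coupled to $\e_j$). Next I would sum over $i$ and $j$ via the identity~\eqref{eqn:Gamma_applied} to introduce factors $\Gamma(b-a,c-a,d-a)\,\Gamma(b'-a',c'-a',d'-a')$; after the substitutions $k:=b-a$, $\ell:=c-a$, $m:=d-a$ (and primed analogues), the remaining $(a,a')$-sum collapses via $x:=a\beta+a'\beta'$, which ranges uniformly over $\GF(p^2)$, into $\Omega(k\beta+k'\beta',\ell\beta+\ell'\beta',m\beta+m'\beta')$. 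The resulting identity takes the form
\begin{multline*}
\frac{1}{4p^6}\sum_{0\le i,j<p}\bigabs{Z_{s,t}(-\e_i,-\e_j)}^4=\frac{64}{p^{10}}\sum_{\substack{0\le k,\ell,m<p\\0\le k',\ell',m'<p}}\e_{k-\ell+m}^S\e_{k'-\ell'+m'}^T\e_\ell\e_{\ell'}\\
\times\Omega(k\beta+k'\beta',\ell\beta+\ell'\beta',m\beta+m'\beta')\,\Gamma(k,\ell,m)\,\Gamma(k',\ell',m').
\end{multline*}

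The next step is to split this sum according to whether the $\Omega$-argument triple lies in $I$. The off-$I$ part I would control via $\bigabs{\Omega}\le 3p$ combined with the square of~\eqref{eqn:Gamma_bound}, yielding a contribution of $O(p^{-1}(\log p)^8)$. For the on-$I$ part, linear independence of $\beta,\beta'$ over $\GF(p)$ forces the triple into exactly one of three sub-loci: (i) $\ell=k$, $\ell'=k'$, $m=m'=0$; (ii) $m=k$, $m'=k'$, $\ell=\ell'=0$; or (iii) $m=\ell$, $m'=\ell'$, $k=k'=0$. These intersect pairwise (and triplewise) only at the all-zero tuple, so inclusion-exclusion gives the on-$I$ contribution as $B_1+B_2+B_3-2B_0$, where $B_r$ is the sum of the case-(r) summands and $B_0$ the single all-zero summand.

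Each $B_r$ I would evaluate by substituting the closed forms~\eqref{eqn:Omegakk0} and~\eqref{eqn:Gammakk0} and then applying the identity~\eqref{eqn:exp_sum_identity}. In cases~(i) and~(iii) the rotation phase reduces to $1$ on the sub-locus and the $\e_\ell\e_{\ell'}$ factor cancels against the $1/(\e_k\e_{k'})$ appearing in $\Gamma(k,k,0)\Gamma(k',k',0)$; assembling the main sub-case with the boundary sub-cases in which one or both indices vanish yields $B_1=B_3=\tfrac{1}{4}+o(1)$. In case~(ii) the phase $\e_k^{2S}\e_{k'}^{2T}$ survives, and the main sub-case factors as a product of two one-variable sums, each equal to $\tfrac{p^2}{2}(|2S-1|/p-\tfrac{1}{2})^2-(p^2+2)/24$ by~\eqref{eqn:exp_sum_identity}; expanding the product yields the leading term $64(\abs{s}-\tfrac{1}{4})^2(\abs{t}-\tfrac{1}{4})^2$ together with cross-terms proportional to $(\abs{s}-\tfrac{1}{4})^2$ and $(\abs{t}-\tfrac{1}{4})^2$ that are precisely cancelled by the corresponding boundary sub-cases of~(ii). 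Since $B_0=(p^2-1)(p^2+2)^2/(36p^6)\to 1/36$, the total becomes $\tfrac{1}{4}+\tfrac{1}{4}+64(\abs{s}-\tfrac{1}{4})^2(\abs{t}-\tfrac{1}{4})^2-\tfrac{2}{36}=\tfrac{4}{9}+64(\abs{s}-\tfrac{1}{4})^2(\abs{t}-\tfrac{1}{4})^2$, yielding the claimed asymptotic.

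The hard part will be the precise bookkeeping in the last step: several $O(1)$ contributions from the boundary sub-cases must cancel exactly to leave the clean limit $\tfrac{4}{9}$, and the inclusion-exclusion correction $-2B_0$ must be tracked carefully to avoid overcounting the all-zero tuple. In particular, the cross-terms in $(\abs{s}-\tfrac{1}{4})^2$ and $(\abs{t}-\tfrac{1}{4})^2$ arising in the main sub-case of~(ii) only cancel once the boundary contributions of~(ii), in which one of $k,k'$ vanishes, are included.
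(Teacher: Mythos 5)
Your proposal is correct and follows essentially the same route as the paper: double Lagrange interpolation, the Gauss-sum evaluation via Lemma~\ref{lem:gauss_sum}, the reduction to $\Omega$ and $\Gamma$, the Weil bound off $I$ giving $O(p^{-1}(\log p)^8)$, and the identity~\eqref{eqn:exp_sum_identity} on $I$. The only difference is bookkeeping of the on-$I$ sum: you stratify by the three loci of $I$ and correct the overlap at the zero tuple by inclusion--exclusion, whereas the paper stratifies by whether $k$ and $k'$ vanish and carries the combined phase $2\e_k\e_{k'}+\e_k^{2S}\e_{k'}^{2T}$ within each stratum; both tallies give $\tfrac14+\tfrac14-\tfrac{2}{36}=\tfrac49$ plus the product term.
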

\begin{proof}
The idea of the proof is similar to that of the proof of Lemma~\ref{lem:Qsum1}. The main difference is that we now have to apply interpolation of $Z_{s,t}(x,y)$ in both indeterminates.
\par
Let $i$ and $j$ be integer. By the interpolation formula~\eqref{eqn:interpolation} we have
\begin{align*}
Z_{s,t}(-\e_i,-\e_j) & = \frac{2}{p}\sum_{k=0}^{p-1} Z_{s,t}(\e_k,-\e_j) \frac{\e_k}{\e_k+\e_i}.
\intertext{Apply the interpolation formula~\eqref{eqn:interpolation} again to obtain}
Z_{s,t}(-\e_i,-\e_j) & = \frac{4}{p^2}\sum_{0\le k,\ell<p} Z_{s,t}(\e_k,\e_\ell) \frac{\e_k}{\e_k+\e_i}\frac{\e_\ell}{\e_\ell+\e_j}.
\intertext{Set $S:=\lfloor ps\rfloor$ and $T:=\lfloor pt\rfloor$. Then by~\eqref{eqn:Xst_X} and Lemma~\ref{lem:gauss_sum} we get}
Z_{s,t}(-\e_i,-\e_j) & = \frac{4}{p}(-1)^{\frac{p+1}{2}}\sum_{0\le k,\ell<p}\e_k^{-S}\e_\ell^{-T}\chi(k\beta+\ell\beta') \frac{\e_k}{\e_k+\e_i}\frac{\e_\ell}{\e_\ell+\e_j},
\intertext{where $\chi$ is the quadratic character of $\GF(p^2)$ and $\{\beta,\beta'\}$ is some basis for $\GF(p^2)$ over $\GF(p)$. Since we also have}
\overline{Z_{s,t}(-\e_i,-\e_j)} & = \frac{4}{p}(-1)^{\frac{p+1}{2}}\sum_{0\le k,\ell<p}\e_k^S\e_\ell^T\chi(k\beta+\ell\beta') \frac{\e_i}{\e_k+\e_i}\frac{\e_j}{\e_\ell+\e_j},
\end{align*}
we obtain
\begin{multline*}
\bigabs{Z_{s,t}(-\e_i,-\e_j)}^4=\left(\frac{4}{p}\right)^4\sum_{0\le a,b,c,d<p}\,\sum_{0\le a',b',c',d'<p}\e_{b-a+d-c}^S\,\e_{b'-a'+d'-c'}^T\\
\chi(a\beta+a'\beta')\chi(b\beta+b'\beta')\chi(c\beta+c'\beta')\chi(d\beta+d'\beta')\\
\frac{\e_a}{\e_a+\e_i}\frac{\e_{a'}}{\e_{a'}+\e_j}\;\frac{\e_i}{\e_b+\e_i}\frac{\e_j}{\e_{b'}+\e_j}\;\frac{\e_c}{\e_c+\e_i}\frac{\e_{c'}}{\e_{c'}+\e_j}\;\frac{\e_i}{\e_d+\e_i}\frac{\e_j}{\e_{d'}+\e_j}.
\end{multline*}
Use~\eqref{eqn:Gamma_applied}, relabel the summation indices, and use the definition~\eqref{eqn:def_Omega} of $\Omega(\kappa,\lambda,\mu)$ to give
\begin{multline*}
\sum_{0\le i,j<p}\bigabs{Z_{s,t}(-\e_i,-\e_j)}^4=\frac{256}{p^4}\sum_{0\le k,\ell,m<p}\,\sum_{0\le k',\ell',m'<p}\e_{k-\ell+m}^S\,\e_{k'-\ell'+m'}^T\,\e_\ell\,\e_{\ell'}\\
\Omega(k\beta+k'\beta',\ell\beta+\ell'\beta',m\beta+m'\beta')\,\Gamma(k,\ell,m)\,\Gamma(k',\ell',m').
\end{multline*}
Let $I$ be the set defined in~\eqref{eqn:def_I}, and write
\begin{equation}
\frac{1}{4p^6}\sum_{0\le i,j<p}\bigabs{Z_{s,t}(-\e_i,-\e_j)}^4=A+B,   \label{eqn:sum_S_AB}
\end{equation}
where
\begin{align*}
A&=\frac{64}{p^{10}}\sum_{\stack{0\le k,\ell,m,k',\ell',m'<p}{(k\beta+k'\beta',\ell\beta+\ell'\beta',m\beta+m'\beta')\not\in I}}\e_{k-\ell+m}^S\,\e_{k'-\ell'+m'}^T\,\e_\ell\,\e_{\ell'}\\[1ex]
&\hspace{5em}\Omega(k\beta+k'\beta',\ell\beta+\ell'\beta',m\beta+m'\beta')\,\Gamma(k,\ell,m)\,\Gamma(k',\ell',m')\\[1ex]
B&=\frac{64}{p^{10}}\sum_{\stack{0\le k,\ell,m,k',\ell',m'<p}{(k\beta+k'\beta',\ell\beta+\ell'\beta',m\beta+m'\beta')\in I}}\e_{k-\ell+m}^S\,\e_{k'-\ell'+m'}^T\,\e_\ell\,\e_{\ell'}\\[1ex]
&\hspace{5em}\Omega(k\beta+k'\beta',\ell\beta+\ell'\beta',m\beta+m'\beta')\,\Gamma(k,\ell,m)\,\Gamma(k',\ell',m').
\end{align*}
From~\eqref{eqn:Omega_bound} we have the bound
\begin{align}
\bigabs{A}&\le \frac{192}{p^9}\Bigg(\sum_{0\le k,\ell,m<p}\bigabs{\Gamma(k,\ell,m)}\Bigg)^2   \nonumber\\
&\le192\,p^{-1}(\log p)^8   \nonumber
\intertext{by~\eqref{eqn:Gamma_bound}, giving}
A&=O(p^{-1}(\log p)^8)\quad\mbox{as $p\to\infty$}.   \label{eqn:bound_S1_2}
\end{align}
We use symmetry of $\Omega(\kappa,\lambda,\mu)$ and $\Gamma(k,\ell,m)$ with respect to interchanging their arguments to partition the sum $B$ further as
\begin{equation}
\label{eqn:sum:ABCD}
B=B_1+B_2+B_3+B_4,
\end{equation}
where
\begin{align*}
B_1&=\frac{64}{p^{10}}\,\Omega(0,0,0)\,[\Gamma(0,0,0)]^2\\
B_2&=\frac{64}{p^{10}}\sum_{k=1}^{p-1}\Omega(k\beta,k\beta,0)\,\Gamma(k,k,0)\,\Gamma(0,0,0)(2\e_k+\e_k^{2S})\\
B_3&=\frac{64}{p^{10}}\sum_{k'=1}^{p-1}\Omega(k'\beta',k'\beta',0)\,\Gamma(0,0,0)\,\Gamma(k',k',0)(2\e_{k'}+\e_{k'}^{2T})\\
B_4&=\frac{64}{p^{10}}\sum_{1\le k,k'<p}\Omega(k\beta+k'\beta',k\beta+k'\beta',0)\,\Gamma(k,k,0)\,\Gamma(k',k',0)(2\e_k\e_{k'}+\e_k^{2S}\e_{k'}^{2T}).
\end{align*}
Application of~\eqref{eqn:Omegakk0} and~\eqref{eqn:Gammakk0} to evaluate $\Omega(\kappa,\kappa,0)$ and $\Gamma(k,k,0)$ then gives
\begin{align*}
B_1&=\frac{1}{36}\;\frac{(p^2+2)^2(p^2-1)}{p^6}\\
B_2&=\frac{2}{3}\;\frac{p^4-4}{p^6}\sum_{k=1}^{p-1}\dfrac{2+\e_k^{2S-1}}{\abs{1-\e_k}^2}\\
B_3&=\frac{2}{3}\;\frac{p^4-4}{p^6}\sum_{k=1}^{p-1}\dfrac{2+\e_k^{2T-1}}{\abs{1-\e_k}^2}\\
B_4&=16\frac{p^2-2}{p^6}
\sum_{1\le k,k'<p}\frac{2+\e_k^{2S-1}\e_{k'}^{2T-1}}{\abs{1-\e_k}^2\;\abs{1-\e_{k'}}^2}.
\end{align*}
We complete the proof by evaluating the asymptotic behavior of the terms $B_1$, $B_2$, $B_3$, and $B_4$.
\begin{description}
\item[{\normalfont\itshape The term $B_1$}]
The term $B_1$ becomes
\begin{equation}
\label{eqn:term_A}
B_1=\frac{1}{36}+O(p^{-2})\quad\mbox{as $p\to\infty$}.
\end{equation}

\item[{\normalfont\itshape The terms $B_2$ and $B_3$}]
Since $-\tfrac{1}{2}<s\le \tfrac{1}{2}$ implies that $\abs{2S-1}\le p$ for all sufficiently large $p$, we can use the identity~\eqref{eqn:exp_sum_identity} to write
\begin{align}
B_2&=\frac{2}{3}\;\frac{p^4-4}{p^6}\left[\frac{p^2-2}{8}+\frac{p^2}{2}\left(\frac{\abs{2S-1}}{p}-\frac{1}{2}\right)^2\right].   \nonumber
\intertext{Then, since $S=ps+O(1)$ as $p\to\infty$, we obtain}
B_2&=\tfrac{1}{12}+\tfrac{4}{3}(\abs{s}-\tfrac{1}{4})^2+O(p^{-1})\quad\mbox{as $p\to\infty$}.   \label{eqn:term_B}
\intertext{We proceed similarly for the term $B_3$ and find that}
B_3&=\tfrac{1}{12}+\tfrac{4}{3}(\abs{t}-\tfrac{1}{4})^2+O(p^{-1})\quad\mbox{as $p\to\infty$}.   \label{eqn:term_C}
\end{align}

\item[{\normalfont\itshape The term $B_4$}] The term $B_4$ can be rewritten as
\[
B_4=16\frac{p^2-2}{p^6}\left[2\left(\sum_{k=1}^{p-1}\frac{1}{\abs{1-\e_k}^2}\right)^2+\left(\sum_{k=1}^{p-1}\frac{\e_k^{2S-1}}{\abs{1-\e_k}^2}\right)\left(\sum_{k'=1}^{p-1}\frac{\e_{k'}^{2T-1}}{\abs{1-\e_{k'}}^2}\right)\right].
\]
Noting that $\abs{2S-1}\le p$ and $\abs{2T-1}\le p$ for all sufficiently large $p$, application of~\eqref{eqn:exp_sum_identity} gives
\begin{multline*}
B_4=\frac{2}{9}\;\frac{(p^2-2)(p^2-1)^2}{p^6}\\
+16\frac{p^2-2}{p^6}\left[\frac{p^2}{2}\left(\frac{\abs{2S-1}}{p}-\frac{1}{2}\right)^2-\frac{p^2+2}{24}\right]\left[\frac{p^2}{2}\left(\frac{\abs{2T-1}}{p}-\frac{1}{2}\right)^2-\frac{p^2+2}{24}\right].
\end{multline*}
Since $S=ps+O(1)$ and $T=pt+O(1)$ as $p\to\infty$, we finally obtain
\begin{equation}
B_4=\tfrac{2}{9}+\left[8(\abs{s}-\tfrac{1}{4})^2-\tfrac{1}{6}\right]\left[8(\abs{t}-\tfrac{1}{4})^2-\tfrac{1}{6}\right]+O(p^{-1})\quad\mbox{as $p\to\infty$}.   \label{eqn:term_D}
\end{equation}
\end{description}
The result now follows by substituting the asymptotic forms~\eqref{eqn:term_A},~\eqref{eqn:term_B},~\eqref{eqn:term_C}, and~\eqref{eqn:term_D} into~\eqref{eqn:sum:ABCD} and then~\eqref{eqn:sum:ABCD} and~\eqref{eqn:bound_S1_2} into~\eqref{eqn:sum_S_AB}.
\end{proof}
We are now able to prove the main result of this section.
\begin{theorem}
Let $Y$ be a quadratic-residue array of size $p\times p$, and let $s$ and $t$ be real numbers satisfying $-\tfrac{1}{2}<s,t\le\tfrac{1}{2}$. Then
\[
\frac{1}{\lim\limits_{p\to\infty}F(Y_{s,t})}=\tfrac{1}{9}+\left[\tfrac{1}{2}+8\left(\abs{s}-\tfrac{1}{4}\right)^2\right]\left[\tfrac{1}{2}+8\left(\abs{t}-\tfrac{1}{4}\right)^2\right].
\]
\end{theorem}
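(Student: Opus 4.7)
The strategy is to pass from the binary array $Y_{s,t}$ to the ternary array $Z_{s,t}$ (which differs only in the $(0,0)$ entry), apply Lemma~\ref{lem:CQ} together with the decomposition~\eqref{eqn:CQ} to express $\sum_{u,v}[C_{Z_{s,t}}(u,v)]^2$ as a combination of the four sums analysed in Lemmas~\ref{lem:Qsum0}, \ref{lem:Qsum1}, and \ref{lem:Qsum2}, and finally to divide by $p^4=[C_{Y_{s,t}}(0,0)]^2$ and subtract the main-diagonal contribution.

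First I would observe that, by the definitions of $Y$ and $Z$, we may write $Y_{s,t}=Z_{s,t}+E_{s,t}$ where $E$ is the array whose only nonzero entry is $y_{00}\in\{\pm 1\}$. Expanding $C_{Y_{s,t}}(u,v)$ then shows that $|C_{Y_{s,t}}(u,v)-C_{Z_{s,t}}(u,v)|\le 3$ for every $(u,v)$, and the difference vanishes outside the $(2p-1)\times(2p-1)$ support region. Since both $\sum_{u,v}[C_{Y_{s,t}}(u,v)]^2$ and $\sum_{u,v}[C_{Z_{s,t}}(u,v)]^2$ are $O(p^4)$, a Cauchy--Schwarz argument applied to $(C_{Y_{s,t}}+C_{Z_{s,t}})(C_{Y_{s,t}}-C_{Z_{s,t}})$ yields
\[
\Biggabs{\sum_{u,v}[C_{Y_{s,t}}(u,v)]^2-\sum_{u,v}[C_{Z_{s,t}}(u,v)]^2}=O(p^3),
\]
so replacing $Y_{s,t}$ by $Z_{s,t}$ is asymptotically harmless after division by $p^4$.

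Next, Lemma~\ref{lem:CQ} together with the identity~\eqref{eqn:CQ} gives
\[
\sum_{u,v}[C_{Z_{s,t}}(u,v)]^2=\frac{1}{4p^2}\sum_{0\le i,j<p}\Big(\bigabs{Z_{s,t}(\e_i,\e_j)}^4+\bigabs{Z_{s,t}(-\e_i,\e_j)}^4+\bigabs{Z_{s,t}(\e_i,-\e_j)}^4+\bigabs{Z_{s,t}(-\e_i,-\e_j)}^4\Big).
\]
Dividing by $p^4$ and invoking Lemmas~\ref{lem:Qsum0}, \ref{lem:Qsum1}, and \ref{lem:Qsum2} in turn, the four pieces contribute $\tfrac14$, $\tfrac13+4(\abs{s}-\tfrac14)^2$, $\tfrac13+4(\abs{t}-\tfrac14)^2$, and $\tfrac49+64(\abs{s}-\tfrac14)^2(\abs{t}-\tfrac14)^2$, each up to a vanishing error. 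Summing these four constant parts yields $\tfrac{1}{4}+\tfrac{1}{3}+\tfrac{1}{3}+\tfrac{4}{9}=\tfrac{49}{36}$, so
\[
\frac{1}{p^4}\sum_{u,v}[C_{Z_{s,t}}(u,v)]^2=\tfrac{49}{36}+4(\abs{s}-\tfrac14)^2+4(\abs{t}-\tfrac14)^2+64(\abs{s}-\tfrac14)^2(\abs{t}-\tfrac14)^2+o(1).
\]

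Combining this with the perturbation estimate and using $C_{Y_{s,t}}(0,0)=p^2$ to remove the main-diagonal term gives
\[
\frac{1}{F(Y_{s,t})}=\tfrac{49}{36}-1+4(\abs{s}-\tfrac14)^2+4(\abs{t}-\tfrac14)^2+64(\abs{s}-\tfrac14)^2(\abs{t}-\tfrac14)^2+o(1),
\]
and since $\tfrac{49}{36}-1=\tfrac{13}{36}=\tfrac{1}{9}+\tfrac{1}{4}$, the right-hand side factors (apart from the term $\tfrac19$) as $[\tfrac12+8(\abs{s}-\tfrac14)^2][\tfrac12+8(\abs{t}-\tfrac14)^2]$, giving the claimed expression. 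The main obstacle is already absorbed into the three lemmas of the previous subsection; beyond that, the only delicate point is verifying that replacing $Y_{s,t}$ by $Z_{s,t}$ incurs only an $o(p^4)$ error, which the Cauchy--Schwarz bound above handles cleanly.
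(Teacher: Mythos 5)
Your proposal is correct and follows essentially the same route as the paper: replace $Y_{s,t}$ by the ternary array $Z_{s,t}$ via a Cauchy--Schwarz perturbation bound (the paper bounds the pointwise autocorrelation difference by $1$ rather than your safer $3$, but either suffices for the $O(p^{3})$ estimate), combine the four sums from Lemmas~\ref{lem:Qsum0}, \ref{lem:Qsum1}, and \ref{lem:Qsum2} via~\eqref{eqn:CQ}, and subtract the $(0,0)$ term. The arithmetic $\tfrac{49}{36}-1=\tfrac19+\tfrac14$ and the factorisation into $\bigl[\tfrac12+8(\abs{s}-\tfrac14)^2\bigr]\bigl[\tfrac12+8(\abs{t}-\tfrac14)^2\bigr]$ both check out.
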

\begin{proof}
Let $Z$ be a ternary quadratic-residue array of size $p\times p$. From~\eqref{eqn:CQ} we have
\begin{align*}
\frac{1}{p^4}\sum_{u,v}[C_{Z_{s,t}}(u,v)]^2&=\frac{1}{4p^6}\sum_{0\le i,j<p}\bigabs{Z_{s,t}(\e_i,\e_j)}^4+\frac{1}{4p^6}\sum_{0\le i,j<p}\bigabs{Z_{s,t}(-\e_i,\e_j)}^4\\
&+\frac{1}{4p^6}\sum_{0\le i,j<p}\bigabs{Z_{s,t}(\e_i,-\e_j)}^4+\frac{1}{4p^6}\sum_{0\le i,j<p}\bigabs{Z_{s,t}(-\e_i,-\e_j)}^4.
\end{align*}
Using Lemmas~\ref{lem:Qsum0},~\ref{lem:Qsum1}, and~\ref{lem:Qsum2} to determine the asymptotic behavior of the sums on the right-hand side, we obtain
\begin{multline}
\frac{1}{p^4}\sum_{u,v}[C_{Z_{s,t}}(u,v)]^2\\
=\tfrac{10}{9}+\left[\tfrac{1}{2}+8\left(\abs{s}-\tfrac{1}{4}\right)^2\right]\left[\tfrac{1}{2}+8\left(\abs{t}-\tfrac{1}{4}\right)^2\right]+O\left(p^{-1}(\log p)^8\right)\quad\mbox{as $p\to\infty$}.   \label{eqn:ss_Z}
\end{multline}
Now, since the binary array $Y$ differs from the ternary array $Z$ only in a single position, we have
\[
C_{Y_{s,t}}(u,v)=C_{Z_{s,t}}(u,v)+\delta(u,v),
\]
where
\begin{equation}
\abs{\delta(u,v)}\le\begin{cases}
1 & \mbox{for $-p<u,v<p$}\\
0 & \mbox{otherwise}.
\end{cases}   \label{eqn:abs_delta}
\end{equation}
Then, by the Cauchy--Schwarz inequality,
\begin{multline*}
\Biggabs{\sum_{u,v}[C_{Y_{s,t}}(u,v)]^2-\sum_{u,v}[C_{Z_{s,t}}(u,v)]^2}\\
\le\sum_{u,v}[\delta(u,v)]^2+2\bigg(\sum_{u,v}[C_{Z_{s,t}}(u,v)]^2\bigg)^{1/2}\bigg(\sum_{u,v}[\delta(u,v)]^2\bigg)^{1/2},
\end{multline*}
so that by~\eqref{eqn:abs_delta} and~\eqref{eqn:ss_Z}
\[
\frac{1}{p^4}\Biggabs{\sum_{u,v}[C_{Y_{s,t}}(u,v)]^2-\sum_{u,v}[C_{Z_{s,t}}(u,v)]^2}=O(p^{-1})\quad\mbox{as $p\to\infty$}.
\]
The theorem follows from~\eqref{eqn:ss_Z} by noting that $C_{Y_{s,t}}(0,0)=p^2$.
\end{proof}
\par


\section{Concluding Remarks}
\label{sec:conclusion}

We have computed the asymptotic value of the merit factor of Legendre arrays (under certain conditions on the growth rate of their dimensions) and of quadratic-residue arrays, for all rotations of rows and columns. The asymptotic merit factor of rotated Legendre arrays and rotated quadratic-residue arrays is shown in Figures~\ref{fig:FX} and~\ref{fig:FZ}, respectively. The maximum asymptotic merit factor, taken over all rotations, equals $\frac{36}{13}\simeq 2.77$ for both array families, which occurs at the rotations $(s,t)$, where $s,t\in\{\frac{1}{4},\frac{3}{4}\}$. However, at all other rotations, the asymptotic merit factor of quadratic-residue arrays is larger than that of Legendre arrays. On the other hand, an advantage of Legendre arrays is that they are not restricted to be square.
\begin{figure}[htp]
\centering
\includegraphics[width=12cm]{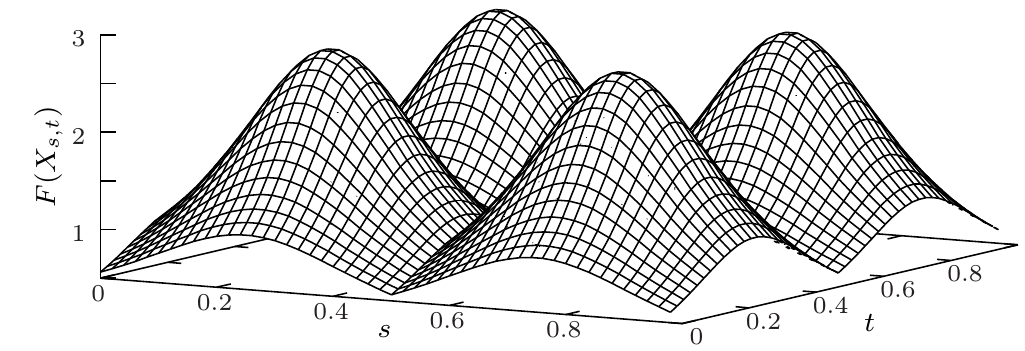}
\caption{The asymptotic value of $F(X_{s,t})$ (under conditions specified in Theorem~\ref{thm:Farray1}), where $X$ is a Legendre array.}
\label{fig:FX}
\end{figure}
\begin{figure}[htp]
\centering
\includegraphics[width=12cm]{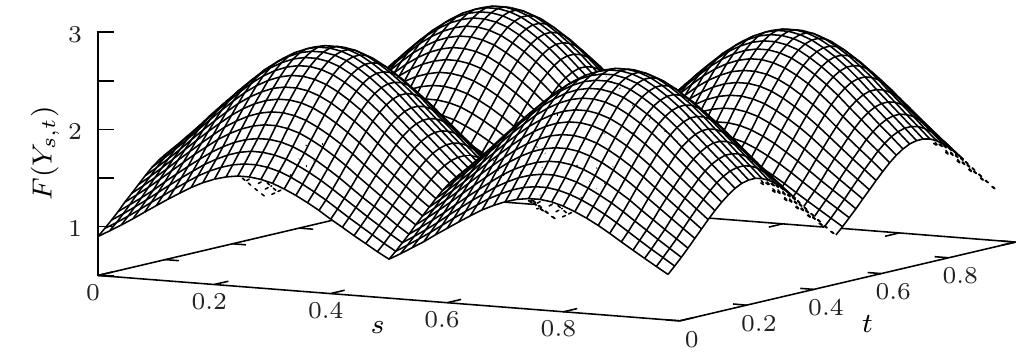}
\caption{The asymptotic value of $F(Y_{s,t})$, where $Y$ is a quadratic-residue array.}
\label{fig:FZ}
\end{figure}
\par
In~\cite{Borwein2004}, the authors exhibited a family of sequences $A$ obtained by appending an initial fraction of a rotated Legendre sequence to itself. Based on partial explanations and extensive numerical computations, it was conjectured in~\cite{Borwein2004} that this sequence family has asymptotic merit factor greater than~$6.34$. Under the assumption that this conjecture is correct, the corresponding square product array $A\times A$ has asymptotic merit factor greater than $2.93$, by~\eqref{eqn:F_prod}. This suggests that the maximum asymptotic value of the merit factor of the two array families considered in this paper, namely $\frac{36}{13}\simeq 2.77$, can be surpassed by another array family.
\par
In closing, we remark that instead of studying the merit factor of two-dimensional arrays of size $n\times m$, Gulliver and Parker~\cite{Gulliver2005} studied the (suitably generalised) merit factor of $d$-dimensional arrays of size $2\times 2\times\cdots\times 2$. In~\cite{Gulliver2005}, the merit factor of several families of such arrays was calculated. In particular, the largest asymptotic value, as $d\to\infty$, of the merit factor of a family of $d$-dimensional arrays considered in~\cite{Gulliver2005} equals~$3$.


\section*{Acknowledgements} 
I would like to thank Jonathan Jedwab for valuable discussions on the subject and for careful comments on a draft of this paper.


\end{document}